\documentclass[runningheads]{llncs}
\usepackage[T1]{fontenc}
\usepackage{hyperref}
\usepackage{enumitem}
\usepackage{wrapfig}
%HB: commented\usepackage{subcaption}
%\usepackage[T1]{fontenc}
%\usepackage[utf8]{inputenc}
\usepackage{todonotes}
\usepackage{soul}
\usepackage{makecell}
\usepackage{todonotes}
\usepackage{array}
\newcolumntype{L}[1]{>{\raggedright\arraybackslash}m{#1}}
\newcolumntype{C}[1]{>{\centering\arraybackslash}m{#1}}

\usepackage{amsmath, amssymb, graphics, graphicx}
\usepackage{color}
\usepackage{comment}
\usepackage{cases}
\usepackage{xparse}
\usepackage{xargs}
\usepackage{appendix}
\usepackage{xcolor}

\usepackage{relsize}

\usepackage{tikz}
\usetikzlibrary{positioning, calc, shapes, fit}

\usepackage{standalone}
\usepackage{import}
\usetikzlibrary{positioning, shapes.geometric, shapes, fit, arrows, decorations.markings, arrows.meta}

\newcommand{\defproblem}[3]{
	\vspace{2mm}
	%  \hline
	\vspace{1mm}
	\noindent\fbox{
		\begin{minipage}{0.95\textwidth}
			#1 \\
			{\bf{Input:}} #2  \\
			{\bf{Task:}} #3
		\end{minipage}
	}
	%  \vspace{1mm}
	%  \hline
	\vspace{2mm}
}

\usepackage{amsthm}
\theoremstyle{remark}
\newtheorem*{claimproof}{Proof of Claim}

%\addbibresource{papers.bib}

\title{XALP-completeness of Parameterized Problems on Planar Graphs}
\author{Hans L. Bodlaender\inst{1}\orcidID{0000-0002-9297-3330} \and \\
	Krisztina Szil\'agyi\inst{2}\orcidID{0000-0003-3570-0528}\thanks{Supported by the project CRACKNP that has received funding from the European
		Research Council (ERC) under the European Union's Horizon 2020 research and
		innovation programme (grant agreement No 853234) and under the project Robotics and advanced industrial production (reg. no. CZ.02.01.01/00/22\_008/0004590).}}
\institute{Utrecht University, Utrecht, the Netherlands \email{h.l.bodlaender@uu.nl}\and Czech Technical University, Prague, Czechia \email{krisztina.szilagyi@fit.cvut.cz}}
\begin{document}
	
	\maketitle
	
	\begin{abstract}
		The class XNLP consists of (parameterized) problems that can be solved non-deterministically in $f(k)n^{O(1)}$ time and $g(k)\log n$ space, where $n$ is the size of the input instance and $k$ the parameter. 
		The class XALP consists of problems that can be solved in the above time and space with access to an additional stack. These two classes are a ``natural home'' for many standard graph problems and their generalizations. 
		
		In this paper, we show the hardness of several problems on planar graphs, parameterized by outerplanarity, treewidth and pathwidth, thus strengthening several existing results. 
		In particular, we show XALP-completeness of the following problems parameterized by outerplanarity: \textsc{All-or-Nothing Flow}, \textsc{Target Outdegree Orientation}, \textsc{Capacitated (Red-Blue) Dominating Set}, \textsc{Target Set Selection} etc. We also show the XNLP-completeness of \textsc{Scattered Set} parameterized by pathwidth and XALP-completeness parameterized by treewidth and outerplanarity.
	\end{abstract}
	\keywords{Parameterized Complexity \and XNLP \and XALP \and Planar Graphs \and Outerplanarity}
\section{Introduction}
	\label{sec:intro}
	In classical complexity theory, we can classify problems depending on their space complexity or their time complexity. 
	These two classifications are intertwined as follows: $L\subseteq NL\subseteq P\subseteq NP\subseteq PSPACE=NPSPACE\subseteq EXP$. In particular, the class NL (Nondeterministic Logarithmic-space) is contained in the class P (Polynomial-time).
	It is natural to ask whether there is a similar chain of inclusions for the parameterized analogues of these classes. In particular, what can we say about the relationship between FPT (analogue of P) and XNL (analogue of NL)? 
	
	This question was first posed by Elberfeld et al.~\cite{ElberfeldST15}. They defined the class $N[fpoly, flog]$ (here called XNLP) as the class of (parameterized) problems that can be solved with a nondeterministic Turing machine 
	in $f(k)n^{O(1)}$ time and $g(k)\log n$ space and gave the first problems complete for this class. They also present several problems that are complete for this class, stating: \textit{The real challenge lies in finding problems together with \emph{natural} parameterizations that are complete}. 
	%HB: changed the second f(k) to g(k)
	
	In the last decade, this question has been resolved: many standard graph problems and their generalizations were shown to be XNLP-complete (see Table~\ref{table:results} for examples of such problems). This class turned out to be a natural home for several problems that are $W[t]$-hard for all $t$ but unlikely to belong to $W[P]$, such as \textsc{Bandwidth} and \textsc{List Coloring} (parameterized by pathwidth) \cite{BodlaenderGNS22a}.  
	
	Recently, in~\cite{BodlaenderGJPP22a} a ``tree variant'' of the class XNLP was introduced. This class is called XALP, and many problems that are XNLP-complete parameterized by pathwidth are shown to be XALP-complete parameterized by treewidth (see Table~\ref{table:results}). The class XALP contains problems that can be solved by a nondeterministic Turing machine with access to an auxiliary stack in $f(k)n^{O(1)}$ time and $g(k)\log n$ space.
	
	A natural question to ask is whether we can say something about hardness for other graph parameters. Several XNLP-hard problems for treewidth are known to be easy (i.e. in FPT) parameterized by the so-called stable gonality parameter~\cite{BodlaenderCW22a}. In this paper, we study the hardness of problems on planar graphs, parameterized by outerplanarity, treewidth and pathwidth.
	
	Outerplanarity is a natural parameter for studying problems on planar graphs. As graphs of outerplanarity
	$k$ have treewidth at most $3k-1$ (see, e.g.,~\cite{bodlaender1998partial}), algorithmic results for graphs 
	of bounded treewidth carry over to planar graphs of bounded outerplanarity. This has also been exploited
	in the well-known and often applied layering technique of Baker~\cite{Baker94}, resulting in approximation schemes for many problems on planar graphs.
	
	There are several celebrated results in the field of parameterized
	complexity where the complexity of a problem significantly decreases
	when we move from general graphs to planar graphs, for instance
	\textsc{Dominating Set} parameterized by solution size
	then drops from being W[2]-complete~\cite{downey1995fixed} to a fixed parameter tractable problem with a linear kernel~\cite{AlberFN04}.
	The results of this paper show that for several problems that
	are in XP for treewidth as parameter, we have no complexity drop
	when we move to the realm of planar graphs.
	
	\paragraph{Our results.} An overview of the results of this paper can be seen in Table \ref{table:results}. 
	In addition to these results, we also show the XNLP-completeness of \textsc{Binary CSP} for $k\times n$-grids (parameterized by $k$). 
    
	\renewcommand{\arraystretch}{1.5}
	\setlength{\tabcolsep}{3pt}
	\begin{table}[h!]
		\centering
		\begin{tabular} {|L{10em}|C{8em}|C{8.5em}|C{8.5em}|} \hline
			& outerplanarity & treewidth & pathwidth \\ \hline
			\textsc{Binary CSP} & XALP-c. (\ref{ssec:csp_tw}) & XALP-c. \cite{BodlaenderGJPP22a} & XNLP-c. \cite{BodlaenderGNS22a}\\ \hline
			\textsc{List Coloring} & XALP-c. (\ref{ssec:csp_tw}) & XALP-c. \cite{BodlaenderGJPP22a} & XNLP-c. \cite{BodlaenderGNS22a}\\ \hline
			\textsc{Precoloring Extension} & XALP-c. (\ref{ssec:csp_tw})  & XALP-c. \cite{BodlaenderGJPP22a} & XNLP-c. \cite{BodlaenderGNS22a}\\ \hline
			\textsc{Scattered Set} & XALP-c. (\ref{sec:scattered}) & XALP-c. (\ref{sec:scattered}) & XNLP-c. (\ref{sec:scattered})\\ \hline
			\textsc{All-or-Nothing Flow} & XALP-c. (\ref{sec:flow}) & XALP-c. \cite{BodlaenderGJPP22a} & XNLP-c. \cite{BodlaenderCW22a}\\ \hline
			\textsc{All-or-Nothing Flow} with cap. $\leq 2$ & XALP-c. (\ref{ssec:flow12}) & XALP-c. (\ref{ssec:flow12})  & XNLP-c. (\ref{ssec:flow12}) \\ \hline
			\textsc{Target Outdegree Orientation} & XALP-c. (\ref{ssec:too}) & XALP-c. \cite{BodlaenderGJPP22a} & XNLP-c. \cite{BodlaenderCW22a} \\ \hline
                \textsc{Min./Max. Outdegree Orientation} & XALP-c. (\ref{ssec:too}) & XALP-c. \cite{BodlaenderGJPP22a} & XNLP-c. \cite{BodlaenderCW22a} \\ \hline
                \textsc{Circulating Orientation} & XALP-c. (\ref{ssec:co}) & XALP-c. \cite{BodlaenderGJPP22a} & XNLP-c. \cite{BodlaenderCW22a}\\ \hline
			\textsc{Capacitated (Red-Blue) Dominating Set} & XALP-c. (\ref{ssec:capds}) & XALP-c. \cite{BodlaenderGJPP22a} & XNLP-c. \cite{BodlaenderGJJL22}\\ \hline
			\textsc{Capacitated Vertex Cover} & XALP-c. (\ref{ssec:capvc}) & XALP-c. \cite{BodlaenderGJPP22a} & XNLP-c. \cite{BodlaenderGJJL22} \\ \hline
			$f$-\textsc{Dominating Set} & XALP-c. (\ref{ssec:fkdomination}) & XALP-c. (\ref{ssec:fkdomination}) & XNLP-c. (\ref{ssec:fkdomination}) \\ \hline
			$k$-\textsc{Dominating Set} & XALP-c. (\ref{ssec:fkdomination}) & XALP-c. (\ref{ssec:fkdomination}) & XNLP-c. (\ref{ssec:fkdomination})\\ \hline
			\textsc{Target Set Selection} & XALP-c. (\ref{ssec:tss}) & XALP-c. (\ref{ssec:tss}) & XNLP-c. (\ref{ssec:tss})\\ \hline
		\end{tabular}
		\vspace{5pt}
		\caption{An overview of results of this paper and results from the literature. Results of this paper are stated with the corresponding section numbers.}
		\label{table:results}    
	\end{table}
	
	\paragraph{Organization.} In Section~\ref{sec:notation} we introduce some definitions and notations. In Section~\ref{sec:csp} we strengthen two previous hardness results concerning \textsc{Binary CSP}. Section~\ref{sec:scattered} discusses the \textsc{Scattered Set} problem.
	In Section~\ref{sec:flow} we show hardness of the \textsc{All-or-Nothing Flow} problem. 
	In Section~\ref{sec:reductionsfromallornothing} we consider several problems whose hardness can be shown by a reduction from \textsc{All-of-Nothing Flow}.
	We finish with some concluding remarks in Section~\ref{sec:conclusion}. 
	
\section{Definitions and Notation}
	\label{sec:notation}
	\subsection{Graph notions}
	Given an integer $n$, we define $[n]=\{1,2,\dots,n\}$.
	Throughout this paper, graphs are undirected, unless otherwise specified.
	Given a graph $G=(V, E)$, we denote by $N(v)$ the open neighbourhood of a vertex $v$, i.e. $N(v)=\{u\in V:\: \{u,v\}\in E\}$.
	
	In Section~\ref{sec:flow}, we consider the \textsc{All-or-Nothing Flow} problem, a variant of the well-known \textsc{Flow} problem.
	
	\begin{definition}
		A \emph{flow network} is a tuple $(G,s,t,c)$, with $G=(V,E)$ a directed graph, $s,t\in V$ two
		vertices, and $c: E \rightarrow \mathbb{N}$ a capacity function, assigning to each arc
		a positive integer capacity. A \emph{flow} is a function $f: E \rightarrow \mathbb{N}$, that
		assigns to each arc a non-negative integer, such that
		\begin{enumerate}
			\item for each arc $e\in E$, $0 \leq f(e)\leq c(e)$ (capacity constraint), and
			\item for each vertex $v\in V\setminus \{s,t\}$, $\sum_{wv\in E} f(wv) = \sum_{vx\in E} f(vx)$ (flow conservation).
		\end{enumerate}
		The \emph{value} of a flow
		$f$ is $\sum_{sx\in E} f(sx) - \sum_{ws\in E} f(ws)$.
		A flow $f$ is an
		\emph{all-or-nothing} flow, if for each edge $e\in E$, $f(e)\in \{0,c(e)\}$.
	\end{definition}

Note that $s$ (respectively $t$) might have incoming (outgoing) edges with nonzero flow. For this reason, we define the value of a flow as the difference between the outgoing and incoming flow to $s$ as opposed to just the outgoing flow from $s$.  
    
	One of the parameters we consider is outerplanarity.
	A planar graph is outerplanar (or 1-outerplanar) if it has an embedding such that all its vertices are on the outer face. Informally speaking, a planar graph $G$ is $k$-outerplanar if its vertices are on $k$ ``layers''. Formally, $k$-outerplanar graphs are defined as follows.
	
	\begin{definition}
		An embedding of a planar graph $G$ is $k$-\emph{outerplanar} if for every vertex $v$ there is a sequence $F_1,v_1,F_2,v_2,\dots, F_\ell, v_\ell$ such that:
		\begin{itemize}
			\item $F_1$ is the unbounded face and $v_\ell=v$;
			\item For each $i\in [\ell]$, $F_i$ contains $v_i$;
			\item For each $i\in [\ell-1]$, $v_i$ belongs to $F_{i+1}$;
			\item $\ell\leq k$.
		\end{itemize}  
		A planar graph is $k$-\emph{outerplanar} if it has a $k$-outerplanar embedding.
	\end{definition}
	
	In other words, a graph is $k$-outerplanar if after $k$ operations consisting of removing all the vertices on the outer face, we obtain an empty graph. The outerplanarity of a graph
	can be computed in polynomial time~\cite{Kammer07}.
	
	\paragraph*{Treewidth and Pathwidth.} For completeness, we state the standard definition of treewidth and pathwidth, as well as another equivalent definition that we will use in the later sections.
	
		\begin{definition}
		A \emph{tree decomposition} of a graph $G=(V,E)$ is a pair $(\{X_i:\: i\in I\}, T=(I,F))$, with
		$T$ a tree and $\{X_i~|~i\in I\}$ a collection of subsets of $V$, such that
		\begin{enumerate}
			\item $\bigcup_{i\in I} X_i = V$;
			\item For each edge $\{v,w\}\in E$, there is an $i\in I$ with $v, w\in X_i$; and
			\item For all $v\in V$, the set $I_v =\{i\in I:\:  v\in X_i\}$ forms a connected subtree of $T$.
		\end{enumerate}
		The \emph{width} of a tree decomposition $(\{X_i:\: i\in I\}, T=(I,F))$ is $\max_{i\in I} |X_i|-1$, and the \emph{treewidth} of a graph $G$ is the minimum width of a tree decomposition 
		of $G$.
		
		A tree decomposition $(\{X_i:\: i\in I\}, T=(I,F))$ is a \emph{path decomposition}, if $T$ is a 
		path, and the \emph{pathwidth} of a graph $G$ is the minimum width of a path decomposition of $G$.
	\end{definition}
	
	Throughout the paper,  we assume that we are given the tree decomposition and path decomposition of the input graph.
	Kloks~\cite{Kloks94} introduced the notion of nice tree and path decompositions. We will use a 
	variant of nice path decompositions, which we will describe with a sequence of operations
	on terminal graphs. Here, a terminal graph is a triple $(V,E,X)$, together with a binary relation $\prec\subset X\times X$ which is  astrict total ordering on $X$. We call elements of $X$ terminals. 
	We consider the following operations on terminal graphs: 
	\begin{itemize}
		\item \textbf{Introduce}. Given a terminal graph $G=(V,E,X)$, the introduce operation adds
		a new isolated vertex $v$ to $V$ and to $X$, with $v$ the smallest vertex regarding the ordering in $X$: $G'=(V\cup \{v\}, E, \{v\} \cup X)$, with $v \prec x$ for all $x\in X$.
		\item \textbf{Forget}. Given a terminal graph $G=(V,E,X)$ with $X\neq\emptyset$, the largest
		element of $X$ is no longer a terminal: $G'= (V, E, X \setminus \{x\})$, with
		$y \prec x$ for all $y\in X \setminus \{x\}$.
		\item \textbf{Add-Edge}$(i)$. Given a terminal graph $G=(V,E,X)$ with $|X|>i$, we add an edge
		between the $i$th and the $(i+1)$st terminal: $V$ and $X$ are unchanged, and one edge is added to $E$.
		\item \textbf{Swap}$(i)$. Given a terminal graph $G=(V,E,X)$ with $|X|>i$, swap in the ordering
		in $X$ the $i$th and the $(i+1)$st vertex. 
	\end{itemize}
	
	To build graphs of small treewidth, we need one more operation, that now has
	two terminal graphs as input.
	\begin{itemize}
		\item \textbf{Join}. Given two terminal graphs $G_1=(V_1, E_1, X)$ and
		$G_2=(V_2,E_2,X)$ that only intersect in their terminal nodes, i.e., $V_1\cap V_2 = X$ and $E_1\cap E_2=\emptyset$, we build the graph $G=(V_1\cup V_2, E_1\cup E_2, X)$. The operation `fuses' the two terminal graphs, identifying 
		for each $i$, the $i$th terminal node of the first graph with the $i$th terminal node of the second graph.
	\end{itemize}
	
	Note that with a sequence of Swap operations, we can order the terminals in any manner; one can
	thus easily observe that a nice path decomposition of a graph $G =(V,E)$ of pathwidth at most $k$ (see~\cite{Kloks94}) can be transformed
	into a sequence of Introduce, Forget, Add-Edge and Swap operations that creates the terminal
	graph $G=(V,E,\emptyset)$ with $O(kn)$ operations, 
	and with each intermediate terminal graph
	having at most $k+1$ terminals. 
    
    Similarly, a nice tree decomposition
	can be transformed to a similar sequence of  Introduce, Forget, Add-Edge, Swap
	and Join operations. From this sequence of operations, we can go back to
	a \emph{generalized tree decomposition}, i.e., we have a rooted tree, with each bag
	consisting of the set of terminals, and of one of the following types: Introduce, Forget, Add-Edge, Swap
	and Join.
	We omit the simple details; one can also carry out these
	transformations in logarithmic space. 
	
	\subsection{The classes XNLP and XALP}
	A \emph{parameterized problem} is a language $L\subseteq \Sigma^*\times\mathbb{N}$ for some finite alphabet $\Sigma$. 
	The class XNLP consists of parameterized problems that can be solved by a non-deterministic algorithm which uses $f(k)n^{O(1)}$ time and $g(k)\log n$ space for some computable function $f$ (where $k$ is the parameter and $n$ is the size of the input instance).
	
	As in the classical complexity setting, in order to define XNLP-hardness and XNLP-completeness, we need the notion of \emph{reduction}. 
	\begin{definition}
		A \emph{parameterized reduction} from a problem $L_1\subseteq \Sigma_1^*\times \mathbb{N}$ to a problem $L_2\subseteq \Sigma_2^*\times \mathbb{N}$ is a function $f:\Sigma_1^*\times \mathbb{N}\rightarrow \Sigma_2^*\times \mathbb{N}$ such that the following two conditions are satisfied:
		\begin{itemize}
			\item For all $(x,k)\in \Sigma_1^*\times \mathbb{N}$, $(x,k)\in L_1$ if and only if $f((x,k))\in L_2$,
			\item There is a computable function $g:\mathbb{N}\rightarrow \mathbb{N}$ such that for all $(x,k)\in \Sigma_1^*\times \mathbb{N}$ we have $k'\leq g(k)$, where $f((x,k))=(y,k')$.
		\end{itemize}
		We call $f$ a \emph{parameterized logspace reduction} or \emph{pl-reduction} if there is an algorithm that computes $f((x,k))$ in space $O(g'(k)+\log |x|)$, where $g':\mathbb{N}\rightarrow \mathbb{N}$ is a computable function and $|x|$ the size of $x$ (i.e. number of bits needed to store $x$). 
	\end{definition}
	
	We call a problem XNLP-hard if any other problem in XNLP can be pl-reduced to it, and XNLP-complete if it is XNLP-hard and in XNLP.
	
	We define XALP as the class of problems that can be solved by a non-deterministic Turing machine with a stack in $f(k)n^{O(1)}$ time and $g(k)\log n$ space. For other equivalent definitions of XALP we refer the reader to~\cite{BodlaenderGJPP22a}. 
	
	Membership in XNLP (respectively XALP) can usually be shown by standard dynamic programming techniques on path decompositions (respectively tree decompositions). Intuitively, the $\log$ space complexity can be achieved by guessing the entry (rather than storing all the entries). Examples of membership proofs can be found in \cite{BodlaenderGJPP22a,BodlaenderGJJL22,BodlaenderGNS22a}. In particular, note that bounded outerplanarity implies bounded treewidth, so any problem which is in XALP (XNLP respectively) parameterized by treewidth is in XALP (XNLP) parameterized by outerplanarity. 
	
	In our hardness proofs, we skip the details that show the logarithmic space bounds for the
	pl-reductions. The standard technique to realize such space is to recompute needed values instead of storing them. For example, if the reduction uses a Sidon set, we do not store the
	Sidon set, but each time we need the $i$th element of the set, we recompute it. This increases
	the time of the reduction by a polynomial factor, but keeps the used space small.
	%HB: new paragraph
	
\section{Binary CSP}
	\label{sec:csp}
	The \textsc{Constraint Satisfaction} Problems (CSP) form a well-known class of problems with a wide set of applications, ranging from artificial intelligence to operations research. 
	In this section, we study \textsc{Binary CSP}, a CSP where all constraints are binary. Formally, the problem is defined as follows.
	
	\defproblem{\textsc{Binary CSP}}{A graph $G = (V,E)$, a set $\mathcal{C}$, set $C(v)\subseteq \mathcal{C}$ for each $v\in V$, set $C(u,v)\subseteq \mathcal{C}\times\mathcal{C}$ for each ordered pair $(u,v)\in V\times V$ such that $\{u,v\}\in E$}{Is there a function $f:V\rightarrow \mathcal{C}$ such that for every $v\in V$, $f(v)\in C(v)$ and for every $uv\in E$, we have $(f(u), f(v))\in C(u,v)$?}
	
	We call the elements of $\mathcal{C}$ colors, the sets $C(v)$ domains (or vertex constraints) and the sets $C(u,v)$ edge constraints. We emphasize that $C(u,v)$ contains ordered pairs, i.e. even though the graph is undirected, the ordering of vertices matters for the edge constraints. 
	
	\textsc{List Coloring} and \textsc{Precoloring Extension} are special cases of \textsc{Binary CSP}. 
	\textsc{List Coloring} and \textsc{Precoloring Extension} are special cases of \textsc{Binary CSP}, defined as follows.
	
	\defproblem{\textsc{List Coloring}}{A graph $G = (V,E)$, a set $\mathcal{C}$, set $C(v)\subseteq \mathcal{C}$ for each $v\in V$}{Is there a function $f:V\rightarrow \mathcal{C}$ such that for every $v\in V$, $f(v)\in C(v)$ and for every $uv\in E$, $f(u)\neq f(v)$?}
	
	\defproblem{\textsc{Precoloring Extension}}{A graph $G = (V,E)$, a set $\mathcal{C}$, a subset $W\subseteq V$, a function $f': W \rightarrow \mathcal{C}$}{Is there a function $f:V\rightarrow \mathcal{C}$ such that for every $v\in W$, $f(v)=f'(v)$ and for every $uv\in E$, $f(u)\neq f(v)$?}
	
	Note that an instance of \textsc{List Coloring} can be seen as an instance of \textsc{Binary CSP} where for each edge $uv$ the edge constraint is $C(u,v)=\{(a,b):a\neq b\}$. An instance of \textsc{Precoloring Extension} can be seen as an instance of \textsc{Binary CSP} where each vertex has a domain of size 1 or $|\mathcal{C}|$.
	Note that we may assume without loss of generality that $C(u,v)\subseteq C(u)\times C(v)$. 
	
	\textsc{Binary CSP}, \textsc{List Coloring} and \textsc{Precoloring Extension} were shown to be XNLP-complete parameterized by pathwidth \cite{BodlaenderGNS22a}. In the first part of this section, we will show that \textsc{Binary CSP} is also XNLP-complete for a more restrictive graph class, namely $k\times n$-grids (parameterized by $k$).
	As vertices whose lists are larger than their degree plus one can always be colored, and
	\textsc{List Coloring} is FPT for graphs of bounded treewidth plus maximum list size
	\cite{JansenS97}, \textsc{List Coloring} and \textsc{Precoloring Extension} are FPT for
	$k\times n$-grid, parameterized by $k$.
	
	\textsc{Binary CSP}, \textsc{List Coloring} and \textsc{Precoloring Extension} were shown to be XALP-complete parameterized by treewidth and by treewidth plus degree \cite{BodlaenderGJPP22a}. In the second part of this section, we will show for each of these three problems XALP-completeness for a more restrictive graph class, namely $k$-outerplanar graphs (parameterized by $k$). We remark that \textsc{List Coloring} is in L for trees~\cite{BodlaenderGJ22listcoloring}.
	
	\subsection{XNLP-completeness for $k\times n$-grids}
	\label{ssec:csp_grid}
	In order to prove the XNLP-completeness for $k\times n$-grids, we will use the following lemma, which gives an equivalent definition of pathwidth.
	\begin{lemma}[\cite{bodlaender1998partial}]\label{lem:pw_int}
		A graph $G$ has pathwidth $k$ if and only if it is a subgraph of an interval graph $H$ such that $\omega(H)=k+1$.
	\end{lemma}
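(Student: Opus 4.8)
The plan is to prove the two directions of the equivalence through the standard correspondence between path decompositions and interval representations, in both cases leaning on the Helly property of intervals on the line (pairwise intersecting intervals share a common point).

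First I would show that pathwidth $k$ yields such an $H$. Take a path decomposition $(X_1,\dots,X_m)$ of minimum width $k$, linearly ordered along the path. By the connectivity axiom, for each vertex $v$ the set $\{i : v\in X_i\}$ is a contiguous block of indices, which I assign as a closed interval $I_v\subseteq\mathbb{R}$. Let $H$ be the interval graph on vertex set $V$ with these intervals, so $\{u,v\}\in E(H)$ iff $I_u\cap I_v\neq\emptyset$. The edge-covering axiom guarantees that any edge of $G$ lies in a common bag, hence the corresponding intervals overlap and $G$ is a subgraph of $H$. For the clique number, a clique of $H$ is a family of pairwise intersecting intervals, which by Helly has a common point $p$; the number of intervals through $p$ equals the size of the bag at position $p$, so $\omega(H)$ equals the maximum bag size, which is exactly $k+1$ since the decomposition has minimum width $k$.

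For the converse I would start from an interval representation $\{I_v\}_{v\in V(H)}$ of an interval supergraph $H$ with $\omega(H)=k+1$ and build a path decomposition of $G$ of width at most $k$. The key structural fact is that the maximal cliques of an interval graph can be linearly ordered so that, for every vertex, the cliques containing it form a consecutive subsequence (the clique-path property); equivalently, sweeping a point from left to right and recording the set of intervals it currently meets yields a sequence of bags. I would then verify the three path-decomposition axioms: every vertex appears (its interval is nonempty), every edge of $H$ — and hence of the subgraph $G$ — is covered (intersecting intervals lie in a common maximal clique), and the occurrences of each vertex are consecutive (the clique-path property). Since each bag is a clique of $H$, it has size at most $\omega(H)=k+1$, so the width is at most $k$.

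Finally I would reconcile the two bounds: the forward direction turns pathwidth $k$ into an interval supergraph with $\omega=k+1$, and the converse turns any such supergraph into a decomposition of width $\le k$, so together they identify the pathwidth of $G$ with $\min_H(\omega(H)-1)$ over interval supergraphs $H\supseteq G$, which is the content of the lemma. The main obstacle is the clique-path property invoked in the converse: I would either cite the characterization of interval graphs by a consecutive ordering of their maximal cliques, or prove it directly by ordering the maximal cliques according to their Helly witness points and checking consecutiveness from the interval structure.
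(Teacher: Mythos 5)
Your proof is correct; note that the paper itself gives no proof here, citing the result from the literature, and your argument (intervals from the trace of each vertex in a path decomposition, the Helly property to identify cliques with bags, and the left-to-right sweep of an interval representation for the converse) is exactly the standard proof found in the cited reference. You also rightly flag and resolve the one delicate point, namely that the literal ``iff'' only holds under the reading $\mathrm{pw}(G)=\min_{H\supseteq G}(\omega(H)-1)$ over interval supergraphs $H$, since the converse direction alone yields only $\mathrm{pw}(G)\le k$.
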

	Now we are ready to prove the main theorem of this section.
	\begin{theorem}
		\textsc{Binary CSP} is XNLP-complete for $k\times n$-grids parameterized by $k$.  
	\end{theorem}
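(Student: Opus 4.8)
The plan is to prove membership and hardness separately, with essentially all of the work going into the latter. For membership, observe that the $k \times n$-grid has pathwidth at most $k$, and a path decomposition of width $k$ can be produced in logarithmic space. Since \textsc{Binary CSP} is in XNLP parameterized by pathwidth~\cite{BodlaenderGNS22a}, feeding this decomposition to that algorithm solves the grid restriction nondeterministically in $f(k)n^{O(1)}$ time and $g(k)\log n$ space, so \textsc{Binary CSP} on $k \times n$-grids is in XNLP parameterized by $k$.

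For hardness I would give a pl-reduction from \textsc{Binary CSP} parameterized by pathwidth, which is XNLP-complete~\cite{BodlaenderGNS22a}. Given an instance on a graph $G$ of pathwidth $k$, first take a nice path decomposition expressed as a sequence of $O(kn)$ \textbf{Introduce}, \textbf{Forget}, \textbf{Add-Edge} and \textbf{Swap} operations on terminal graphs with at most $k+1$ terminals at all times, as set up in Section~\ref{sec:notation}. The idea is to realize the (at most $k+1$) currently active terminals as $k+1$ horizontal \emph{wires} occupying $k+1$ rows of a grid, mapping the $j$th terminal in the ordering $\prec$ to row $j$, and to process the operation sequence left to right, spending an $O(k)$-wide block of columns per operation. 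A wire is a horizontal path of grid vertices, each with domain $\mathcal{C}$ and joined to its successor by the equality constraint $\{(a,a) : a\in\mathcal{C}\}$, so that every cell of the wire is forced to carry the color assigned to the corresponding terminal. An \textbf{Introduce} of $v$ starts a new wire whose first cell has domain $C(v)$; since the new terminal is the smallest, the existing wires are pushed up one row by a crossing-free monotone staircase over $O(k)$ columns. A \textbf{Forget} removes the largest terminal, whose wire sits in the top row and simply ends. A \textbf{Swap}$(i)$ exchanges the wires in rows $i$ and $i+1$ by a single \emph{crossing gadget}: a grid cell with domain $\mathcal{C}\times\mathcal{C}$ whose horizontal neighbours read and write its first coordinate and whose vertical neighbours read and write its second coordinate, letting the two colors pass through independently. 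An \textbf{Add-Edge}$(i)$ applies to the $i$th and $(i+1)$st terminals, which occupy adjacent rows, so I connect the two wires by the vertical grid edge at one column and put the edge constraint $C(u,v)$ on that edge.

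To make the underlying graph an honest $k'\times n'$ grid with $k'=O(k)$ and $n'$ polynomial in $n$, I fill every unused cell with the full domain $\mathcal{C}$ and every unused grid edge with the all-pairs constraint $\mathcal{C}\times\mathcal{C}$, which imposes no restriction. For correctness, a satisfying assignment of $G$ yields a grid coloring by propagating each terminal's color along its wire and setting each crossing cell to the pair of colors passing through it; conversely, reading off the wire values of a grid coloring recovers a satisfying assignment of $G$, since the equality propagation and crossing gadgets transmit values faithfully and the vertical Add-Edge edges enforce exactly the original edge constraints. The reduction runs in logarithmic space by recomputing the operation sequence and the layout coordinates on demand rather than storing them. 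I expect the main obstacle to be the crossing gadget together with verifying that the wire layout is a legitimate grid of height $O(k)$: one must check that equality propagation plus $\mathcal{C}\times\mathcal{C}$ crossing cells transmit the two colors through a shared vertex without interference, and that Introduce-shifts, Swaps, and Add-Edges can all be packed into $O(k)$ rows and polynomially many columns while leaving every remaining grid edge free to carry the trivial constraint.
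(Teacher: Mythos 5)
Your proposal is correct in substance but takes a genuinely different route from the paper. The paper's hardness proof does not simulate the sequence of path-decomposition operations; instead it invokes Lemma~\ref{lem:pw_int} to view a pathwidth-$(k-1)$ graph $G$ as a subgraph of an interval graph with clique number $k$, and lays the intervals out statically on the grid: each vertex $v$ of $G$ becomes one horizontal path $I_v$ plus one vertical column wire $J_v$ through its left endpoint $\ell_v$, colors are propagated along both by equality-type constraints (after refining the grid with ``green'' vertices), and each edge $uv$ is checked at a single cell where $I_v$ meets the column $J_u$, using exactly your product-domain trick (domain $\{c_{xy} : (x,y)\in C(v,u)\}$). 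A pleasant consequence of that layout is that all wire crossings are automatically horizontal-versus-vertical, so the paper needs no Swap gadget, no staircase shifts for Introduce, and no routing/packing argument at all; your operational simulation, by contrast, is essentially the technique the paper itself uses later for Theorem~\ref{thm:csp_outerplanarity} (outerplanarity), where node types are realized one by one and crossover gadgets handle Swaps. One concrete point you flagged that genuinely needs the fix: two wires confined to rows $i$ and $i+1$ cannot cross without sharing a cell, so your Swap crossing cell must let one wire pass \emph{vertically} through it, i.e.\ that wire locally occupies rows $i$ through $i+2$ (enter from below, exit above, descend later); this costs $O(1)$ spare rows per level, which is harmless since you only claim $k'=O(k)$, whereas the paper's interval layout achieves width exactly $k$. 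Your membership argument (pathwidth $\le k$ of the grid plus the known XNLP algorithm for pathwidth) matches what the paper leaves implicit, and your logspace remarks align with the paper's standard recompute-on-demand convention. In short: both proofs rest on the same two primitives---equality wires and product-domain crossing cells---but the paper buys simplicity of verification from the interval-graph characterization, while your construction is more modular and transfers more directly to other decomposition-driven targets, at the price of the routing bookkeeping you correctly identified as the main obstacle.
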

	\begin{proof}
		We will reduce from \textsc{Binary CSP} parameterized by pathwidth. Let $G$ be a graph of pathwidth $k-1$. By Lemma~\ref{lem:pw_int}, there is an interval graph $H'$ such that $G$ is a subgraph of $H'$ and $\omega(H')=k$. Let $\mathcal{I}$ be a collection of intervals whose intersection graph is $H'$. Intuitively, we will draw the intervals in $\mathcal{I}$ on a grid. Formally, we construct an instance of \textsc{Binary CSP} with underlying graph $H$ as follows. We start with $H$ being a $k\times n$-grid. To each $v\in V(G)$, we assign a horizontal path $I_v$ and a vertex $\ell_v$ in $H$ such that the following conditions are met:
		\begin{itemize}
			\item $\ell_v$ is the left endpoint of $I_v$;
			\item $I_v\cap I_u=\emptyset$ for all $u\neq v$;
			\item $\ell_u$ and $\ell_v$ are in different columns for all $u\neq v$;
			\item For all edges $uv\in E(G)$, there is a column in $H$ that contains a vertex in $I_v$ and a vertex in $I_u$.
		\end{itemize}
		%It is easy to see that $H$ can be embedded on a $k\times n$-grid (i.e. a grid with $k$ rows and $n$ columns) such that the intervals correspond to disjoint horizontal paths, every row contains a path, no two endpoints are in the same column, and every column contains an endpoint of a path. We will call all the grid vertices black.
		We will now make the grid $H$ twice finer, i.e. we will add a row between every two adjacent rows and a column between every two adjacent columns. We will call the newly added rows, columns and vertices green, and the original vertices black. Each green vertex whose left and right neighbours are in $I_v$ for some $v\in V(G)$ is added to $I_v$.
		
		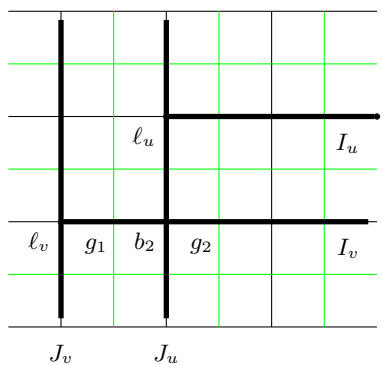
\begin{figure}[ht]
			\centering
			\begin{tikzpicture}
	\tikzset{
		vx/.style={draw, circle, inner sep=0pt, outer sep=0pt, fill, minimum size=2pt}
	}
	\def\u{0.7}
	\foreach \i in {1,3,5}{
		  \draw (\i*\u, \u)--(\i*\u, 7*\u);}
        \foreach \i in {1,3,5,7}{
		  \draw (0, \i*\u)--(7*\u, \i*\u);}
	\foreach \i in {2,4,6}{
		  \draw[color=green] (\i*\u, \u)--(\i*\u, 7*\u);
		  \draw[color=green] (0, \i*\u)--(7*\u, \i*\u);}
	\node(lv) at (\u, 3*\u) [label=below left:$\ell_v$, vx] {};
	\node(rv) at (7*\u, 3*\u)  {};
	\draw[line width=2pt] (lv)--(rv);
	\node(tv) at (\u, 7*\u){};
	\node(bv) at (\u, \u){};
	\draw[line width=2pt](tv)--(bv);
	\node (lu) at (3*\u, 5*\u)[vx, label=below left:$\ell_u$]{};
	\node (ru) at (7*\u, 5*\u)[vx]{};
	\draw[line width=2pt](lu)--(ru);
	\node(tu) at (3*\u, 7*\u){};
	\node(bu) at (3*\u, \u){};
	\draw[line width=2pt](bu)--(tu);
	\node(tuv) at (3*\u, 3*\u)[vx, label=below left:$b_2$]{};
	\node[right=\u of lv, label=below left:$g_1$](b) {};
	\node[right=\u of tuv, label=below left:$g_2$](b1){};
	\node at (\u, \u) [label=below: $J_v$]{};
	\node at (3*\u, \u) [label=below: $J_u$]{};
	\node at (7*\u, 3*\u) [label=below left: $I_v$]{};
	\node at (7*\u, 5*\u) [label=below left: $I_u$]{};
	\end{tikzpicture}
			\caption{A part of the graph $H$ showing an intersection of intervals corresponding to $u$ and $v$}
			\label{fig:grid}
		\end{figure}
		
		For each vertex $v$, let $J_v$ be the column that contains $\ell_v$. Consider the green vertices on the paths $I_v$ and $J_v$ for some vertex $v\in V(G)$. We will add constraints to these vertices and to edges between adjacent vertices in $I_v$ and $J_v$ to ensure that all of them get the same color as $\ell_v$. Denote the vertices on $I_v$ from left to right by $b_1,g_1,b_2,g_2,\dots, b_{t-1},g_{t-1},b_t$, where $b_1=\ell_v$, the vertices $g_i$ are green, and $b_i$ are black (see Figure \ref{fig:grid}). 
		For $g_1$, we set $C(g_1)=C(v)$ and $C(\ell_v, g_1)=\{(c,c):\: c\in C(v)\}$. For $i=2,...,t$, we define the domains of $g_i$ and $b_i$, as well as the constraints of the adjacent edges as follows.
		We define $C(g_i)=C(v)$.
		If $b_i$ is in column $J_u$ for some $u\in V(G)$, we define $C(b_i)=\{c_{xy}:\: (x,y)\in C(v,u) \}$ (if $uv\not\in E(G)$, we define $C(u,v)=C(u)\times C(v)$). 
		We define $C(g_{i-1}, b_i)=\{(x, c_{xy}): xy\in C(v,u) \}$ and $C(b_i, g_i)=\{(c_{xy}, x): xy\in C(v,u)\}$. If $b_i$ is not in column $J_u$ for any $u\in V(G)$, we define $C(b_i)=\{c_{xx}:\: x\in C(v)\}$, $C(g_{i-1}, b_i)=\{(x, c_{xx}):\: x\in C(v)\}$ and $C(b_i, g_i)=\{(c_{xx}, x): \:x\in C(v)\}$.
		Note that now all of the green vertices on $I_v$ have the same color as $\ell_v$. We define the domains of vertices on $J_v$ and the constraints between adjacent vertices analogously. 
		For all the other vertices $v$ and edges $uv$ of $H$, we define $C(v)=\mathcal{C}$ and $C(u,v)=\mathcal{C}\times \mathcal{C}$.
		
		Given a valid coloring of $H$, we can color each vertex $v\in V(G)$ with the color of $\ell_v$.  
		For every edge $uv\in E(G)$, we have that $I_v$ and $J_u$ intersect at a black vertex or that $I_u$ and $J_v$ intersect at a black vertex. By our choice of domain for the black vertex in the intersection, we ensured that the constraint of the edge $uv$ is satisfied. Thus we get a valid coloring of $G$. It is easy to see that given a valid coloring of $G$, we can construct a valid coloring of $H$.
	\end{proof}

	\subsection{XALP-completeness parameterized by outerplanarity}
	\label{ssec:csp_tw}
	In this section, we show the XALP-completeness of \textsc{Binary CSP} parameterized by outerplanarity.
	\begin{theorem}
		\label{thm:csp_outerplanarity}
		\textsc{Binary CSP} is XALP-complete parameterized by outerplanarity.
	\end{theorem}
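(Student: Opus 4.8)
The plan is to prove XALP-completeness in two parts: membership and hardness. Membership in XALP parameterized by outerplanarity follows immediately from the fact (noted in the excerpt) that bounded outerplanarity implies bounded treewidth, combined with the known XALP-membership of \textsc{Binary CSP} parameterized by treewidth~\cite{BodlaenderGJPP22a}. So the real work is the hardness reduction.

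For hardness, the natural strategy mirrors the $k \times n$-grid construction from the previous subsection, but now I need the host graph to be $k$-outerplanar rather than a grid. I would reduce from \textsc{Binary CSP} parameterized by treewidth, which is XALP-hard by~\cite{BodlaenderGJPP22a}. Given an instance on a graph $G$ of treewidth $k-1$, I would use the generalized tree decomposition described in Section~\ref{sec:notation} (the sequence of Introduce, Forget, Add-Edge, Swap and Join operations with at most $k$ terminals per bag). The idea is to ``draw'' this tree decomposition into a planar embedding where each bag lives on a bounded number of concentric layers, so that the depth of nesting is controlled by $k$ and yields bounded outerplanarity. Each terminal that persists across several bags is represented by a path of vertices (as in the grid proof), and I would propagate its color along that path using equality constraints $C(x,y) = \{(c,c) : c \in C\}$, while encoding each original edge constraint of $G$ at the vertex where the two relevant terminal-paths meet, using the product-coloring gadget $c_{xy}$ exactly as in the grid construction.

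The main obstacle, and the crux of the proof, is realizing the \textbf{Join} operation planarly while keeping outerplanarity bounded. In the grid/path setting terminals live on a single line, but a tree decomposition branches, and at a Join node two subtrees each carrying $k$ terminal-paths must be merged so that corresponding terminals are identified. Embedding this merge in the plane without crossings, while ensuring that the two incoming ``bundles'' of $k$ paths and the outgoing bundle can be routed so that every vertex still sits within $O(k)$ layers of the outer face, is the delicate part. I would handle it by laying the tree decomposition out in a book-like or radial embedding: route each terminal as a layer-bounded path, place Join gadgets so that the two child bundles approach from disjoint angular regions, and add the matching equality constraints between the $i$th terminal of each child and the $i$th terminal of the parent. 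One must verify carefully that this embedding has outerplanarity bounded by a function of $k$ alone (the parameterized-reduction condition $k' \le g(k)$), independent of the instance size $n$.

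Finally I would argue correctness in both directions, which is routine given the gadgets: a valid coloring of $G$ extends to $H$ by propagating each terminal's color along its path and setting the product-colors $c_{xy}$ at the meeting vertices; conversely, the equality constraints force consistency along each path, so reading off the color of each terminal-path's designated endpoint yields a valid coloring of $G$, with every edge constraint satisfied at the corresponding Join (or Add-Edge) gadget. The logarithmic space bound for the reduction follows by the recomputation technique already mentioned in the excerpt, so I would omit those details.
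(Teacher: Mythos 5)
Your proposal follows essentially the same route as the paper: reduce from \textsc{Binary CSP} parameterized by treewidth, draw the generalized tree decomposition (Introduce, Forget, Add-Edge, Swap, Join) in the plane with each terminal's colour propagated along an equality-constrained path, handle crossings and constraint checks with product-colour gadgets, and bound the outerplanarity by $O(k)$. The one difference is of emphasis rather than substance: in the paper the Join is the easy case (the two child bundles attach from disjoint regions so that the Join subgraph is a subgraph of a $k\times n$ grid, hence $k/2$-outerplanar, much as your angular-region routing suggests), while the crossings that genuinely require the product-colour crossover gadget arise at the Swap nodes --- a case your sketch covers only implicitly through the phrase ``exactly as in the grid construction.''
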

	\begin{proof}
		We will reduce from \textsc{Binary CSP} parameterized by treewidth. Given a graph $G$ with a nice tree decomposition of treewidth $k$, we will construct a $k$-outerplanar (plane) graph $H$. We first transform the nice tree decomposition to a generalized one, i.e., each bag is of one of the 
		following types:  Introduce, Forget, Add-Edge, Swap or Join.
		
		We denote by $\prec_X$ the ordering of terminals corresponding to the bag $X$. For each bag $X$ and each terminal $v$ in $X$, we add to $H$ a vertex $v_X$. We draw the vertices of $H$ in the plane as follows. For each bag $X$, the terminals in $X$ are on the same horizontal line, ordered from left to right according to $\prec_X$. For each bag $X$ and its child $Y$, the vertices in $H$ corresponding to $X$ are above vertices corresponding to $Y$.
		
		Denote by $B_X$ the set $B_X=\{v_{X}:\: v\in X\}\subseteq V(H)$.   
		For each node $X$ of the tree decomposition of $G$, we add the following edges to $H$. For each $v\in V(G)$, if both $X$ and its child $Y$ contain $v$, we add the edges $v_{X}v_{Y}$. In addition, if $X$ is an Add-Edge node corresponding to edge $vw$, we add the edge $v_Xw_X$ to $H$ (see Figure~\ref{fig:XALPtree}). 
		
		\begin{figure}[h]
			\centering
			\includegraphics{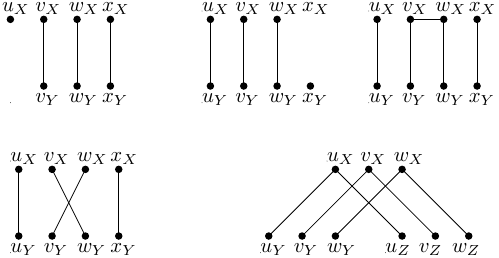}
			\caption{Representation of nodes: Introduce $u$, Forget $x$, Add-Edge $vw$, Swap $vw$, Join.}
			\label{fig:XALPtree}
		\end{figure}
		
		For each bag $B_X$ and each vertex $v_{X}\in B_X$, set $C(v_{X})=C(v)$. For each edge $v_{X}v_{Y}$ in $H$, set $C(v_{X}, v_{Y})=\{(d, d):\: d\in C(v)\}$. For each edge $v_Xw_X$ in $H$ set $C(v_{X}, w_{X})=C(v, w)$.
		
		Note that the graph $H$ is not planar yet. We will now turn it into a planar graph by using crossover gadgets. 
		%First, take an plane embedding of the tree $T$ from the tree decomposition. 
		%Expand the embedding by replacing each node representing a bag by the copies of the vertices in the bag on a horizontal line, as discussed above. 
		%We now have an embedding of the constructed graph --- this embedding still
		%is not planar, as some edges intersect.
		%HB Added paragraph for clarity
		Consider edges $ab$ and $cd$ that intersect at a point. Add a vertex $m$ at the intersection point and subdivide the edges $mb$ and $md$ by adding vertices $a'$ and $c'$ respectively. Set $C(a')=C(a)$, $C(c')=C(c)$, $C(m)=\{m_{ij}: \: i\in C(a), j\in C(c)\}$, $C(m, a)=C(m, a')=\{(m_{ij}, i): \: i\in C(a), j\in C(c)\}$, $C(m, c)=C(m, c')=\{(m_{ij}, j):\: j\in C(c), i\in C(a)\}$, $C(a', b)=C(a,b)$, $C(c', d)=C(c,d)$. This way, we ensure that $a'$ (respectively $c'$) has the same color as $a$ (respectively $c$).  
		
		In any valid coloring of $H$, all vertices that correspond to the same vertex in $G$ must have the same color. Also, for each edge in $G$, we have an Add-Edge node corresponding to it, and in that node we check whether the edge consraint is satisfied. Therefore, any valid coloring of $H$ defines a valid coloring of $G$ and vice versa. It remains to show that $H$ has bounded outerplanarity. 
		For each bag $X$, the vertices $v_{X}$ are on the outer face after at most $k/2$ rounds. The subgraphs corresponding to join nodes are subgraphs of a $k\times n$-grid (with some subdivided edges) and thus $k/2$-outerplanar. The vertices corresponding to the intersection of edges in the Swap nodes will be on the other face in at most $k/2+1$ rounds. Therefore, the graph $H$ is $(k/2+1)$-outerplanar. 
	\end{proof}
	
	The above theorem can be strenghtened: namely, it turns out that \textsc{Binary CSP} is XALP-complete on a subclass of planar graphs. This strenghtening will allow us to show XALP-completeness of \textsc{All-or-Nothing Flow} in Section~\ref{sec:flow}.
	
	\begin{corollary}
    \label{cor:csp_special}
		\textsc{Binary CSP} is XALP-complete parameterized by outerplanarity,
		for planar graphs that are connected, bipartite, have minimum degree two, maximum degree three, and adjacent vertices have disjoint colour sets.
	\end{corollary}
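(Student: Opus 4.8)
The plan is to start from the planar graph $H$ built in the proof of Theorem~\ref{thm:csp_outerplanarity} and apply a short sequence of local, satisfiability-preserving modifications that install the five extra properties one at a time, while keeping the outerplanarity in $O(k)$. I would carry them out in the order (i) reduce the maximum degree to three, (ii) make the graph bipartite, (iii) force adjacent colour sets to be disjoint, (iv) raise the minimum degree to two, and (v) ensure connectivity. The guiding principle is that every vertex I add carries a copy of some existing colour set (or a fresh colour) together with equality-type constraints, so that each gadget merely \emph{propagates} a colour and never affects whether the instance is satisfiable. Membership in XALP for this subclass is inherited from Theorem~\ref{thm:csp_outerplanarity}, so only the hardness (the reduction) needs the modifications.

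For (i), I would replace every vertex $v$ of degree $d\geq 4$ by a path of $d-2$ new vertices, each carrying the colour set $C(v)$, with consecutive vertices joined by the equality constraint $\{(c,c):c\in C(v)\}$, and distribute the $d$ original incident edges so that every new vertex has degree at most three. The vertices of $H$ of degree larger than three are the crossover vertices (degree four) and certain Join-gadget vertices; since each replacement is purely local and the replaced vertex lies in a single layer, this raises the outerplanarity by at most an additive constant. For (ii), I would then subdivide every edge exactly once. This makes the graph bipartite with the two sides being the pre-subdivision vertices and the subdivision vertices, it creates only new degree-two vertices (so it preserves the degree bound from (i)), and it does not increase the outerplanarity. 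A subdivided edge with original constraint $C(u,v)$ is handled by the standard CSP edge gadget: the subdivision vertex $m_{uv}$ gets colour set $C(u,v)$, with $C(u,m_{uv})=\{(x,(x,y)):(x,y)\in C(u,v)\}$ and $C(m_{uv},v)=\{((x,y),y):(x,y)\in C(u,v)\}$, which is equivalent to the original constraint.

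For (iii), I would exploit the bipartition produced in (ii): fix a proper $2$-colouring with sides $A$ and $B$, and replace the colour set of each vertex $w$ by a \emph{tagged} copy, using tag $A$ if $w\in A$ and tag $B$ if $w\in B$, rewriting every edge constraint on the tagged colours. Because all adjacencies in the subdivided graph go between the two sides, adjacent vertices now receive colours with distinct tags and hence disjoint colour sets; crucially, this also fixes the equality edges introduced in (i), whose two endpoints are never adjacent after subdivision. For (iv), for each vertex of degree at most one I would attach a fresh even cycle through it (a $4$-cycle for a degree-one vertex, a longer even cycle for an isolated vertex), giving the new vertices distinct fresh singleton colours with trivially satisfiable equality-type constraints; this raises the degree to two or three, keeps the graph bipartite, keeps colour sets disjoint (the new colours are all fresh), and does not change satisfiability. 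Finally, for (v), I may assume the source instance of \textsc{Binary CSP} is connected, so $H$ and all the gadgets above are connected; any residual components can be linked by fresh paths routed through the outer region, choosing path lengths to respect the bipartition and attaching at vertices of degree at most two so the degree bound survives.

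The main obstacle I expect is bounding the outerplanarity throughout: I need each step to add at most a constant number of layers, so that the final graph is still $O(k)$-outerplanar. This is plausible because every gadget is local — it replaces a single vertex, subdivides a single edge, or attaches a small cycle near the outer face — but it requires a careful layer-by-layer argument, in particular that edge subdivision does not increase outerplanarity and that the degree-reduction paths and cycle caps can be drawn without creating new layers. The remaining work is routine: verifying that the tagged, subdivided instance is equivalent to the original (each gadget is a faithful colour propagator or a faithful encoding of an edge constraint), and confirming that everything is still a pl-reduction, which holds since all gadgets are produced by local rules from the bags of the given tree decomposition.
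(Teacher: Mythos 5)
Your proposal follows essentially the same route as the paper's proof: the paper likewise splits high-degree vertices via an equality-propagating tree (your path of copies of $C(v)$) and subdivides every edge with a vertex whose colour set is $C(v,w)$ together with projection constraints, which simultaneously yields bipartiteness and disjoint colour sets for adjacent vertices, making your separate tagging step unnecessary. The only divergence is that the paper simply observes that the construction of Theorem~\ref{thm:csp_outerplanarity} is already connected with minimum degree two, whereas you install these properties with explicit cycle and path gadgets --- harmless extra work, but the core reduction is the same.
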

	
	\begin{proof}
		The reduction described in Theorem~\ref{thm:csp_outerplanarity} yields a connected graph of minimum degree two.
		 %(Alternatively, we can turn each
		%$k$-outerplanar graph into a connected $k$-outerplanar graph by adding a 
		%few edges at the outer face; for the new edges, allow all colour pairs. To each vertex $v$ of degree one, add a triangle (two new vertices that are adjacent to $v$ and each other; again, allow all colour pairs on the new edges.)
		A vertex $v$ of degree $d(v)$ larger than three can be split without changing the outerplanarity: replace $v$ by a tree with $d(v)$ leaves, and require for
		all edges inside the tree that the endpoints have the same colour.
		
		In order to make the graph bipartite, for each edge $\{v,w\}\in E$ we do the following. We 
		subdivide the edge by adding a vertex $x_{vw}$. The vertex $x_{vw}$ has colour set $C(v,w)$, i.e., the set of all allowed pairs of colours for $v$ and
		$w$. Now set $C(v, x_{vw}) = \{(c_v, (c_v,c_w)~|~ c_v\in C(v), 
		(c_v,c_w)\in C(v,w)\}$; i.e., we request that the first coordinate
		of the colour of $x_{c,v}$ is the colour of $v$. 
		Similarly, we set $C(w, x_{vw}) = \{(c_w, (c_v,c_w)~|~ c_w\in C(w), 
		(c_v,c_w)\in C(v,w)\}$. It is easy to see that this gives an equivalent, bipartite instance.
	\end{proof}
	
	It is easy to see that \textsc{Binary CSP} instances can be transformed into equivalent instances of \textsc{List Coloring} and \textsc{Precoloring Extension}, so Theorem~\ref{thm:csp_outerplanarity} implies hardness for these two problems as well.
	\begin{corollary}
		\textsc{List Coloring} and \textsc{Precoloring Extension} are XALP-complete
		parameterized by outerplanarity.
	\end{corollary}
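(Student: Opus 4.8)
The plan is to establish membership and hardness separately. For membership, observe that both \textsc{List Coloring} and \textsc{Precoloring Extension} are special cases of \textsc{Binary CSP}: a \textsc{List Coloring} instance is a \textsc{Binary CSP} instance with $C(u,v)=\{(a,b):a\neq b\}$ on every edge, and a \textsc{Precoloring Extension} instance is one in which every domain has size $1$ or $|\mathcal{C}|$. Since \textsc{Binary CSP} is in XALP parameterized by treewidth, and bounded outerplanarity implies bounded treewidth (Section~\ref{sec:notation}), both problems lie in XALP parameterized by outerplanarity. It therefore remains to prove XALP-hardness.

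For hardness I would reduce from the structured form of \textsc{Binary CSP} given by Corollary~\ref{cor:csp_special}; the only property I use is that adjacent vertices have disjoint colour sets (if one prefers to start from Theorem~\ref{thm:csp_outerplanarity}, disjointness can be enforced in logarithmic space by tagging each colour $c\in C(v)$ with the vertex $v$, which leaves the graph unchanged). Keep every original vertex $v$ with list $C(v)$. For each edge $uv$ write $F(u,v)=(C(u)\times C(v))\setminus C(u,v)$ for its set of forbidden ordered pairs, and for every $(a,b)\in F(u,v)$ introduce a new vertex $z_{uv,a,b}$ with list $\{a,b\}$, made adjacent to $u$ and to $v$. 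Using disjointness one checks that, for a colouring with $f(u)\in C(u)$ and $f(v)\in C(v)$, the colour $a$ conflicts at $z_{uv,a,b}$ only with $u$ and the colour $b$ only with $v$; hence $z_{uv,a,b}$ is colourable if and only if not both $f(u)=a$ and $f(v)=b$. Thus the gadgets attached to $uv$ forbid exactly the pairs in $F(u,v)$ and allow exactly the pairs in $C(u,v)$, so the resulting \textsc{List Coloring} instance is a yes-instance if and only if the \textsc{Binary CSP} instance is.

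The crux of the argument is to verify that this transformation preserves bounded outerplanarity. For a fixed edge $uv$ the gadget vertices $z_{uv,a,b}$ are adjacent only to $u$ and $v$, so together with $u$ and $v$ they induce a subgraph of $K_{2,m}$ (with $m=|F(u,v)|$), which is $2$-outerplanar: after removing the outer face one deletes $u$ and $v$, leaving the $z_{uv,a,b}$ as isolated vertices. Drawing all gadget vertices of an edge inside a single face incident to both its endpoints keeps the graph planar, and since the $z_{uv,a,b}$ disappear within a constant number of peeling rounds after $u$ and $v$, the outerplanarity grows by only an additive constant; it therefore stays $O(k)$. The construction is clearly computable in logarithmic space, so this is a pl-reduction and \textsc{List Coloring} is XALP-hard parameterized by outerplanarity.

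Finally, for \textsc{Precoloring Extension} I would compose the above with the standard reduction from \textsc{List Coloring}: to realise a list $L(v)\subseteq\mathcal{C}$ at a vertex $v$, attach for each colour $c\in\mathcal{C}\setminus L(v)$ a pendant vertex precoloured $c$ and adjacent only to $v$, which forbids $c$ at $v$ while leaving all colours of $L(v)$ available. These pendants are leaves, so they keep the graph planar and raise the outerplanarity by at most one, preserving the $O(k)$ bound. The main obstacle throughout is the outerplanarity bookkeeping, which is handled by the $K_{2,m}$ observation above; the correctness of the individual gadgets is routine given the disjoint-colour-set property supplied by Corollary~\ref{cor:csp_special}.
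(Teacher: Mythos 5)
Your proof is correct and takes essentially the same route as the paper: the paper also reduces from the \textsc{Binary CSP} result via the standard transformations (renaming colours to make adjacent lists disjoint, then replacing each edge $uv$ by degree-2 vertices listing the forbidden pairs, and then attaching precoloured pendant neighbours for \textsc{Precoloring Extension}), noting that these increase outerplanarity by at most 1. The only difference is that the paper cites these transformations from the literature, whereas you spell out the gadget correctness and the outerplanarity bookkeeping explicitly.
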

	
	\begin{proof}
		We observe that the existing transformations increase the outerplanarity by at most 1, see
		e.g.~\cite{BodlaenderGNS22a}. An instance of \textsc{Binary CSP} can be transformed to
		an equivalent instance of \textsc{List Coloring} by renaming some colors, and
		replacing edges $uv$ by a number of vertices of degree 2, each adjacent to $u$ and to $v$.
		From \textsc{List Coloring} we go to \textsc{Precoloring Extension} by adding to each
		vertex $v$ precolored neighbors of degree 1; one such vertex for each color in $\mathcal{C}\setminus C(v)$.
	\end{proof} 
\section{Scattered Set}
	\label{sec:scattered}
	In this section, we will consider the \textsc{Scattered Set} problem,
	which is a generalization of the \textsc{Independent Set} problem and is defined as follows.
	
	\defproblem{\textsc{Scattered set}}{$G = (V,E)$, $k\in \mathbb{N}$, $d\in \mathbb{N}$}{Is there a set $I\subseteq V$ of size $k$ such that for any pair of distinct vertices $u,v\in I$ the distance between $u$ and $v$ is at least $d$?}
	
	For a fixed $d\geq 3$, the problem with parameter $k$ is W[1]-hard, even for bipartite graphs \cite{eto2014distance}. The case when $d=2$ gives
	the \textsc{Independent Set} problem which is W[1]-complete~\cite{DowneyF95fixed2}. For fixed $d$, the problem
	is FPT when parameterized by treewidth, by standard dynamic programming techniques (see \cite{katsikarelis2022structurally}). In
	this section, we consider the case that $d$ is part of
	the input.
	\begin{theorem}
		\textsc{Scattered Set} is XALP-complete parameterized by treewidth and by outerplanarity and it is XNLP-complete parameterized by pathwidth.
	\end{theorem}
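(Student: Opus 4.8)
The plan is to prove all three statements with one membership scheme and one reduction. Membership I would handle by guessing the $k$ selected vertices nondeterministically and verifying feasibility by dynamic programming over the decomposition. The state at a bag $X$ stores, for each $v\in X$, the distance to the nearest already-selected vertex truncated at $d$, i.e.\ a value in $\{0,1,\dots,d\}$, together with the number of vertices selected so far; this needs only $O(k\log n)$ bits, where $k$ is the width. Correctness rests on the separator property of tree and path decompositions together with the standard fixed-$d$ distance dynamic program \cite{katsikarelis2022structurally}. Processing a path decomposition gives XNLP-membership; on a tree decomposition the join nodes are handled with the auxiliary stack, giving XALP-membership; and the bound for outerplanarity follows because bounded outerplanarity implies bounded treewidth. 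For hardness I would reduce from \textsc{Binary CSP} in the three parameterizations for which it is already known to be complete: pathwidth \cite{BodlaenderGNS22a}, treewidth \cite{BodlaenderGJPP22a}, and outerplanarity (Theorem~\ref{thm:csp_outerplanarity}).

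Concretely, I would map a \textsc{Binary CSP} instance on a graph $G$ to a \textsc{Scattered Set} instance $(H,k,d)$ with $d$ chosen large. Each variable $v$ of $G$ receives a \emph{selection gadget} in which exactly one of the vertices representing the colours of $C(v)$ may enter the scattered set, and setting $k$ equal to the number of variables forces precisely one selection per variable, i.e.\ a colouring. Each edge constraint $uv$ receives an \emph{edge gadget} that links the two incident selection gadgets and transmits the chosen colours to a local consistency test, so that a pair of choices respects the scattered-set distance requirement exactly when the corresponding colour pair lies in $C(u,v)$. Laying the gadgets out along a (planar) embedding of $G$ and using subdivided paths for transmission keeps the pathwidth, treewidth, and outerplanarity of $H$ within a function of those of $G$ and preserves planarity, so each of the three source results transfers to \textsc{Scattered Set}.

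I expect the main difficulty to be controlling \emph{all} pairwise distances of $H$ simultaneously. Forcing a single selection per variable naturally places the colour-vertices of one variable close together, whereas blocking a forbidden colour pair on an edge requires two specific colour-vertices to lie within distance $<d$; by the triangle inequality these requirements pull against each other, since making $[u,a]$ close to some vertex of $v$ tends to make every colour-vertex of $u$ close to it as well. The construction must therefore separate the ``select one'' mechanism from the ``check the constraint'' mechanism: rather than joining colour-vertices across an edge directly, the edge gadget reads each chosen colour through a long transmission path into private consistency vertices, and $d$ is chosen large enough relative to all gadget diameters that the only distances below $d$ are the intended ones. Verifying this no-shortcut property — that no chaining of gadgets creates an unintended short path — while simultaneously bounding the width parameter and keeping $H$ planar, is where I expect the bulk of the work to lie.
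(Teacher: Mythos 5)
Your membership argument coincides with the paper's (guess the selections, run a DP over the path/tree decomposition whose state stores, per bag vertex, the distance to the nearest selected vertex truncated at $d$, with the stack handling join nodes), and you choose the same hardness source, \textsc{Binary CSP} in the three parameterizations. But your hardness construction operates in the opposite regime from the paper's and has a concrete hole. With $k$ equal to the number of variables and $d$ large relative to gadget diameters, nothing confines the scattered set to colour-vertices: the midpoint of a long transmission path is far from everything, so a size-$k$ scattered set can be assembled from such ``junk'' vertices even when the CSP instance is unsatisfiable, and the backward direction of the reduction fails. Some budget-forcing mechanism is indispensable. The paper works at the other end of the scale: it sets $d$ equal to the number of colours, attaches pendant paths of length $d-1$ throughout the construction, and takes $k$ large, so that reaching the target size forces a selected vertex in essentially every block; the \emph{position} of the forced selection within each block of the cycle $C_v$ (blocks spaced so that same-colour positions are exactly $d$ apart) is what encodes the colour of $v$. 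Your ``exactly one selection per variable'' premise has no analogue of this forcing.

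The second gap is the constraint-encoding mechanism itself, which you explicitly defer. A single transmission path per edge cannot express an arbitrary $C(u,v)$: if $\mathrm{dist}(i\text{-vertex of }u,\, j\text{-vertex of }v) = p_u(i)+L+p_v(j)$ along a common path, then the pairs at distance $<d$ are exactly those with $p_u(i)+p_v(j)$ below a threshold, so only threshold-type constraints are encodable. You are therefore forced to one connection per forbidden pair, at which point the triangle-inequality cross-talk you flag returns in full: a short path witnessing $(i,j)\notin C(u,v)$ also shortens $\mathrm{dist}(i',j)$ through the intra-gadget metric, and distinct forbidden-pair paths can chain. This is precisely where the paper's quantitative choices do all the work: one path $P(u,v,i,j)$ of length exactly $2d+1$ per forbidden pair, endpoints placed in mirrored blocks ($a$ and $d^2+1-a$) so the $d^2$ connections can be drawn planarly, and pendant paths of length $d-1$ on every path vertex except the middle two, so that the forced selections on the path collide with both endpoints exactly when the forbidden colour pair is chosen. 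Since your proposal supplies neither a forcing mechanism nor this no-shortcut gadget --- the two ingredients that constitute the actual proof --- it is a plan with the key construction missing rather than a proof, and its stated parameters ($k$ equal to the number of variables, $d$ large) would have to be abandoned, not merely refined, to follow the paper's route.
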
 
	\begin{proof}
		Membership in XNLP and XALP, respectively, follows with standard arguments, cf. the discussion in Section~\ref{sec:notation}. A state of the machine will contain as information
		the current bag in the decomposition, and for each vertex in this bag its distance to
		an element in the scattered set (i.e. $k+1$ integers in $[0,d]$).
		
		In both cases, we will reduce from \textsc{Binary CSP}. Given an instance of \textsc{Binary CSP} with underlying graph $G$ and $d$ colors, we will construct an instance $H$ of \textsc{Scattered Set} as follows. The distance between solution vertices $d$ equals
		the number of colors in $\mathcal{C}$.
		
		Firstly, for every vertex $v$ of $G$ construct a cycle $C_v$ of length equal to the degree of $v$. For every edge $\{u,v\}\in E(G)$, construct an edge in $H$ between a vertex in $C_u$ and $C_v$ such that the order of edges around the cycle $C_v$ is the same as the order of edges around $v$ and each vertex of $C_v$ has degree 3. Note that the obtained graph $H$ remains planar. For an edge $\{u,v\}\in E(G)$, let $C_v(u)\in C_v$ and $C_u(v)\in C_u$ be its endpoints. For each edge $\{u,v\}\in E(G)$, replace $C_v(u)$ and $C_u(v)$ with paths $C_v(u)_{1,1}...C_v(u)_{d^2,d}$ and $C_u(v)_{1,1}...C_u(v)_{d^2,d}$ such that the following are true (see Figure \ref{fig:scattered}):
		\begin{itemize}
			\item $C_v(u)_{1,1}$ and $C_v(u)_{d^2,d}$ ($C_u(v)_{1,1}$ and $C_u(v)_{d^2, d}$ respectively) are adjacent to exactly one neighbour of $C_v(u)$ in $C_v$ (neighbour of $C_u(v)$ in $C_u$ respectively);
			\item The vertices $C_v(u)_{i,j}$ ($C_u(v)_{i,j}$ respectively) are lexicographically ordered clockwise around the cycle.
		\end{itemize}
		
		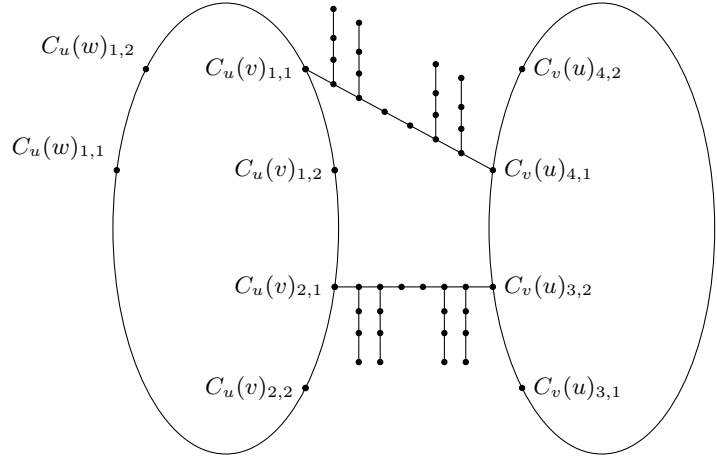
\begin{figure}
			\centering
			\begin{tikzpicture}
	\tikzset{
		vx/.style={draw, circle, inner sep=0pt, outer sep=0pt, fill, minimum size=2pt}
	}
	\def\u{1}
        \def \a{1.5}
        \def \b{3}
	\draw (0,0) ellipse (1.5 and 3);
	\node[vx] at ($(0,0)+(45:1.5 and 3)$)(a11){};
	\node[vx] at ($(0,0)+(15:1.5 and 3)$)(a12){};
	\node[vx] at ($(0,0)+(-15:1.5 and 3)$)(a21){};
	\node[vx] at ($(0,0)+(-45:1.5 and 3)$)(a22){};
	
	\node[vx, label=left:$C_u(v)_{1,1}$](a11) at ($(0,0)+(45:1.5 and 3)$){};	
	\node[vx, label=left:$C_u(v)_{1,2}$](a12) at ($(0,0)+(15:1.5 and 3)$){};
	\node[vx, label=left:$C_u(v)_{2,1}$](a21) at ($(0,0)+(-15:1.5 and 3)$){};
	\node[vx, label=left:$C_u(v)_{2,2}$](a22) at ($(0,0)+(-45:1.5 and 3)$){};
	\draw (5,0) ellipse (1.5 and 3);
	\node[vx, label=right:$C_v(u)_{4,2}$](b11) at ($(5,0)+(135:1.5 and 3)$){};	
	\node[vx, label=right:$C_v(u)_{4,1}$](b12) at ($(5,0)+(165:1.5 and 3)$){};
	\node[vx, label=right:$C_v(u)_{3,2}$](b21) at ($(5,0)+(195:1.5 and 3)$){};
	\node[vx, label=right:$C_v(u)_{3,1}$](b22) at ($(5,0)+(225:1.5 and 3)$){};

	\draw (a11) edge node foreach \i in {1,...,6}[pos=0.14*\i, vx](c\i){} (b12);
	\draw (a21) edge node foreach \i in {1,...,6}[pos=0.14*\i, vx](d\i){}(b21);
	\foreach \i in {1,2,5,6}
	{
		\draw (c\i)-- ++(0, \u) node foreach \i in {0.3, 0.6, 1}[pos=\i, vx]{};	
		\draw (d\i)-- ++(0, -\u) node foreach \i in {0.3, 0.6, 1}[pos=\i, vx]{};
	}
	\node[vx, label=above left:$C_u(w)_{1,2}$](e11) at ($(0,0)+(135:1.5 and 3)$){};
	\node[vx, label=above left:$C_u(w)_{1,1}$](e12) at ($(0,0)+(165:1.5 and 3)$){};
%	\draw[dashed] (e21)-- ++(-\u, 0);
%	\draw[dashed] (e22)-- ++(-\u, 0);
\end{tikzpicture}
			\caption{Cycles $C_u$ and $C_v$ and representation of the forbidden pairs $(1,1)$ and $(1,2)$ corresponding to the edge $uv$.}
			\label{fig:scattered}
		\end{figure}
		
		For each edge $\{u,v\}\in E(G)$, each $i\in [d]$ and each color $j\not \in C(u)$, add a path of length $d-1$ to $C_u(v)_{i,j}$. For each edge $\{u,v\}\in E(G)$, consider the forbidden pairs of colors, i.e. the pairs that do not belong to $C(u,v)$. Intuitively, for each such pair $(i,j)$, we select a vertex in $C_u(v)$ which corresponds to color $i$ and a vertex in $C_v(u)$ which corresponds to color $j$. We connect the selected vertices by a gadget which will ensure that we cannot have both of the selected vertices in $S$.
		
		Formally, let $(i,j)\not \in C(u,v)$ and let $a=(i-1)\cdot d+j$. We construct a path $P(u,v,i,j)$ from $C_u(v)_{a, i}$ to $C_v(u)_{d^2+1-a, j}$ of length $2d+1$. Note that the graph remains planar: we have $d^2$ ``blocks'' in $C_u(v)$ (i.e. paths on the cycle which contain one vertex for each color) on both sides, and the $a$th block has an outgoing edge 
		if and only if the $a$th pair (lexicographically ordered) $(i,j)$ does not belong to $C(u,v)$.  
		To each vertex of $P(u,v,i,j)$ except the middle two, add paths of length $d-1$.  
		
		Consider a $d$-scattered set $S$ in $H$. We claim that it corresponds to a solution to the given \textsc{Binary CSP} instance. For any vertex $v$ and a color $i\not\in C(v)$, without loss of generality we may assume that none of the vertices $C_v(u)_{a, i}$ belong to $S$. Also, note that we cannot have $C_v(u)_{a, i}\in S$ and $C_v(u')_{b, j}\in S$ for $i\neq j$. Therefore, there is exactly one color $i$ such that $C_v(u)_{a, i}\in S$ for some $u$, and this will be the color that we will assign to $v$. It remains to check whether the edge constraints are satisfied. Consider an edge $uv\in E(G)$ and its constraints $C(u, v)$. Suppose that in the above process we selected a color $i$ for $u$ and $j$ for $v$, such that $(i,j)\not \in C(u,v)$. Then there exist $a,b\in [d^2]$ such that $C_u(v)_{a, i}, C_v(u)_{b,j}\in S$ and $C_u(v)_{a, i}$ and $C_v(u)_{b, j}$ are connected by a path. Note that on that path no vertex belongs to $S$, which leads to a contradiction.
		
		It remains to bound the outerplanarity, treewidth and pathwidth of $H$. Let $G$ be $k$-outerplanar and let $p(v)$ denote the layer of a vertex $v$, i.e. the number of times the outer face of $G$ needs to be removed in order to have $v$ on the outer face. Note that any vertex of $C_v$ of degree 2 belongs to the same layer as $v$. Therefore, in round $p(v)+1$ all vertices of $C_v$ are on the outer face. In round $p(v)+2$, all vertices of $P(u,v,i,j)$ are on the outer face. Therefore, the graph $H$ is $k+2$-outerplanar.  Note that $k$-outerplanar graphs have treewidth at most $3k-1$ \cite{bodlaender1998partial}.
		
		Let us now compute the pathwidth of $H$. Consider a path decomposition of $G$ of width $k'$. Given a bag $P'_i\subseteq V(G)$, construct $P_i\subseteq V(H)$ as follows. For every $v\in P'_i$, add to $P_i$ all the vertices on $C_v$, and for every edge $\{u,v\}\in E(G)$ such that $u,v\in P'_i$, add all the vertices on paths between $C_u$ and $C_v$, as well as all the vertices on paths with an endpoint at $C_u$ or $C_v$. It is easy to see that the sets $P_i$ form a path decomposition of $H$, and since $d$ is a constant, the resulting path decomposition has width $O(k')$.
	\end{proof}
\section{All-or-Nothing Flow}
\label{sec:flow}
	The \textsc{All-or-Nothing Flow} problem asks if there is an all-or-nothing flow (i.e. a flow where every edge has either zero flow or flow equal to its capacity) with a given value.  

    \defproblem{\textsc{All-or-Nothing Flow}}{A flow network $(G,c,s,t)$, and an integer $r$.}{Is there an all-or-nothing flow from $s$ to $t$
		with value exactly $r$ in $G$?}
        
	In this section, we show that \textsc{All-or-Nothing Flow} is XALP-complete with the outerplanarity
	plus the pathwidth of the graph as parameter. The
	problem is known to be NP-complete \cite{alexandersson2020np}, XNLP-complete with pathwidth
	as parameter~\cite{BodlaenderCW22a} and XALP-complete with treewidth as parameter~\cite{BodlaenderGJPP22a}. The problem is a good starting point for further hardness proofs, and a natural generalisation of network flow.
	
	We remark that the variant where we ask if there is a flow whose value is at least $r$ is
	equally hard, since we can add a new source $s'$ and an arc from $s'$ to the old source $s$
	with capacity $r$.
\begin{theorem}
		\textsc{All-or-Nothing Flow} is XALP-complete parameterized by outerplanarity, for instances with $s$ and $t$ on the same face.
		\label{theorem:aonf-planar-xalp}
\end{theorem}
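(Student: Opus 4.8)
Membership in XALP parameterized by outerplanarity is inherited from the treewidth case: as observed in Section~\ref{sec:notation}, a bound on outerplanarity implies a bound on treewidth, so the membership result of~\cite{BodlaenderGJPP22a} applies verbatim. The content of the theorem is therefore the hardness, which I would establish by a pl-reduction from the restricted form of \textsc{Binary CSP} supplied by Corollary~\ref{cor:csp_special}: a connected bipartite planar instance with a fixed $O(k)$-outerplanar embedding, minimum degree two, maximum degree three, and with adjacent vertices having \emph{disjoint} colour sets. The last property is exactly what lets flow values decode colour choices unambiguously.

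The plan is to translate colours into capacities via a Sidon set. Fix an injection $w$ from $\mathcal{C}$ into a Sidon set of integers, so that the pairwise sums $w(x)+w(y)$ are distinct; combined with disjointness of adjacent domains, for each edge $\{u,v\}$ the value $w(x)+w(y)$ then uniquely identifies the ordered pair of colours at its endpoints. Each vertex $v$ of $G$ is replaced by a \emph{selection gadget}: a node fed by a single edge and emitting one parallel edge per colour of $C(v)$, all of equal capacity. Flow conservation together with the all-or-nothing rule forces exactly one colour line to be active, which is precisely a choice of colour for $v$; its identity is recorded by \emph{which} line is active, and the target value $r$ will force every selection gadget to be switched on. Since $v$ has degree at most three, the active choice is fanned out along at most three internal lines, one towards each incident edge gadget, where the Sidon-valued capacities are attached.

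For each edge $\{u,v\}$ I would install an \emph{edge gadget} containing one channel per allowed pair $(x,y)\in C(u,v)$, realised as a bottleneck edge of capacity $w(x)+w(y)$ that the colour-$x$ line of $u$ and the colour-$y$ line of $v$ must jointly feed. The Sidon property guarantees that such a channel can be saturated only when $u$ is set to $x$ and $v$ to $y$, while the distinctness of all the sums $w(x)+w(y)$ rules out any parasitic combination of active lines that balances under conservation without corresponding to a genuine pair. Choosing the value $r$ to be the total amount a globally consistent selection must push through the network, one verifies in both directions that all-or-nothing flows of value exactly $r$ are in bijection with colourings $f$ satisfying $f(v)\in C(v)$ for all $v$ and $(f(u),f(v))\in C(u,v)$ for all edges, i.e. with solutions of the \textsc{Binary CSP} instance.

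The remaining tasks are the global wiring and the embedding. I would place every gadget \emph{in situ} along the given $O(k)$-outerplanar embedding of $G$, so that selection and edge gadgets distort the drawing only locally and raise the outerplanarity by a constant factor; any crossings forced inside an edge gadget are removed by a bounded-size flow crossover gadget, exactly as the swap nodes were handled in the proof of Theorem~\ref{thm:csp_outerplanarity}, again costing only an additive constant. Finally $s$ and $t$ are attached along the outer boundary and the supply and return lines are threaded through the unbounded face, so that $s$ and $t$ end up on a common face. I expect the edge gadget to be the main obstacle: it must simultaneously enforce \emph{exactly} the allowed pairs under the all-or-nothing constraint, stay planar, and keep outerplanarity bounded, while the Sidon capacities and the disjointness of adjacent domains are what certify that no flow can satisfy conservation except those encoding a consistent colouring. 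The second delicate point is routing all gadget demands to a single $s$ and $t$ on the same face without spoiling the $O(k)$ outerplanarity, which is managed by running the supply and return paths along the layers of the embedding.
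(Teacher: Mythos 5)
Your starting point and core mechanism coincide with the paper's: membership via the treewidth bound, hardness by reduction from the restricted \textsc{Binary CSP} of Corollary~\ref{cor:csp_special}, colours coded by Sidon numbers, edge gadgets offering one channel of capacity $w(x)+w(y)$ per allowed pair, and unique decomposition of the sum guaranteed by the Sidon property plus disjointness of adjacent domains. However, two steps in your architecture fail. First, your selection gadget does not enforce consistency of a vertex's colour across its up-to-three incident edge gadgets, and is in fact internally inconsistent: you declare the colour lines to have \emph{equal} capacity, yet the line from $u$ towards the edge $\{u,v\}$ must deliver exactly $w(x)$ into the channel of capacity $w(x)+w(y)$, so the lines must carry the (distinct) Sidon values; and once a single active line is ``fanned out along at most three internal lines,'' nothing in your description forces the three incident edge gadgets to see the \emph{same} colour. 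The paper solves exactly this problem with a different local structure: around each vertex $v$ it builds a cycle alternating XOR-gadgets for $v$ (with $2L$ parallel unit arcs feeding parallel output arcs whose capacities are the shifted Sidon values $S(v)\subseteq[L,2L-1]$, so at most one output can be saturated) with XOR-gadgets for the incident edges; flow conservation around the cycle, together with the unique splitting of each edge-gadget value into an $S(v)$-part and an $S(w)$-part, forces every copy of the vertex gadget on the cycle to carry the same value $s(c(v))$.

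Second, your global wiring is the wrong approach. Taking $r$ to be the total demand and threading supply and return lines from one $s$ and $t$ to every gadget requires routing unboundedly many disjoint lines through the embedding, and the crossover gadget you borrow from Theorem~\ref{thm:csp_outerplanarity} is a \textsc{Binary CSP} device (it copies colour information via constraints at the crossing vertex); no crossover gadget for all-or-nothing flow is established anywhere in the paper, and flow conservation alone cannot keep two crossing commodities of arbitrary values separate. The paper avoids routing entirely: the flow simply \emph{circulates} inside the interlocking vertex cycles (clockwise for vertices in one side of the bipartition, counterclockwise for the other, with each edge gadget shared by the two incident cycles), so the construction is planar outright --- verified by noting the skeleton is subcubic and that any $K_{3,3}$-subdivision in $H$ would yield one in $G$ --- and the target value is $1$, created by a single $s$-$t$-gadget obtained from one vertex XOR-gadget by incrementing its output capacities by one. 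That last device also places $s$ and $t$ on a common face automatically, a condition stated in the theorem that your global routing would additionally have to secure. Connectivity of the CSP instance then lets an induction from the $s$-$t$-gadget propagate the nonzero, consistent flow values to all gadgets, which is what replaces your appeal to the magnitude of $r$. Without the alternating-cycle idea and the circulating value-$1$ design, the reduction as you describe it cannot be completed within bounded outerplanarity.
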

\begin{proof}
    Since $k$-outerplanar graphs have bounded treewidth, membership in XALP follows from XALP-completeness of \textsc{All-or-Nothing Flow} parameterized by treewidth~\cite{BodlaenderCW22a}. 
	
    By Corollary~\ref{cor:csp_special}, \textsc{Binary CSP} is XALP-complete parameterized by outerplanarity,
		for planar graphs that are connected, bipartite, with each vertex degree two or three, and have disjoint colour sets for adjacent vertices. We reduce from this problem to \textsc{All-or-Nothing Flow} parameterized by outerplanarity.
		
		We employ the technique of
		modelling the choice of a colour for a vertex by choosing a number from a Sidon set, see e.g. \cite{ErdosT41}.
		A \emph{Sidon set} (also known as Golomb ruler) is a set of positive integers $\{a_1, \ldots, a_n\}$ with
		the property that each different pair of integers from the set has a different sum:
		$a_{i_1}+a_{i_2} = a_{j_1}+a_{j_2}$ implies $\{i_1,i_2\}=\{j_1,j_2\}$. 
		Erd\H{o}s and Tur\'{a}n~\cite{ErdosT41} have shown that for each $n$, there
		is a Sidon set of size $n$ with all numbers in the set at most $4n^2$; one can also observe
		from their proof that this set can be constructed with logarithmic space.
		
		Suppose we are given an instance of \textsc{Binary CSP} with outerplanarity as parameter for bipartite planar graphs with all vertices degree two or three.
		In other words, we are given a graph $G=(V,E)$, a set $D$ of colours, for each vertex $v\in V$,
		a subset $C(v)\subseteq D$, and for each ordered pair $(v, w)$ such that $\{v, w\}\in E$,
		a set of allowed colour pairs $C(v,w)\subseteq C(v)\times C(w)$. Without loss of generality, we may assume that the sets $\{C(v):\: v\in V\}$ partition $D$.
		
    Let $\{a_1,
		\dots, a_{|D|}\}$ be a Sidon set such that all its elements are positive integers in $O(|D|^2)$.
		 Let $L = \max \{a_1, \ldots, a_{|D|}\}+1$. Note
		that the set $\{a_1+L, \ldots, a_{|D|}+L\}$ is also a Sidon set.
		We assign to each colour $c$ in $D$ a unique element $s(c)$ from $\{a_1+L, \ldots, a_{|D|}+L\}$.
		
		For a vertex $v\in V(G)$, we write $S(v) = \{s(c):\:  c\in C(v)\}$ for the set of Sidon numbers
		of the colours we can give to $v$. 
		We define $S(v)+1 = \{ s(c)+1:\:  c\in C(v)\}$ as the set obtained by adding one
		to each element from $S(v)$.
		For a pair of vertices $v,w\in V(G)$, we write
		$S(v)+S(w) = \{ s(c)+s(c') :\: c\in C(v) \wedge c'\in C(w) \}$.
		For an ordered pair of endpoints of an edge $\{v,w\}\in E$,
		we write $S(v, w) = \{ s(c)+s(c') :\: c\in C(v) \wedge c'\in C(w) \wedge (c,c')\in C(v, w)\}$.
		
		\paragraph{Gadgets.}
		Before describing the construction of $H$, we introduce three similar auxiliary gadgets, 
		namely the XOR-gadget for vertices, the XOR-gadget for edges, and
		an $s-t$-gadget. 
		
		In order to simplify the drawing, instead of parallel arcs, we will draw one arc whose capacity is denoted by the (multi)set of capacities of the parallel arcs. We denote the multiset consisting of $t$ ones by $1_t$.
		
		Informally, given a vertex $v\in V(G)$, the XOR-gadget for vertex $v$, denoted by $X(v)$ in the illustrations
		will ``select'' exactly one value $s(c)\in S(v)$, which corresponds to coloring $v$ with colour $c$. Formally, given a vertex $v\in V(G)$, $X(v)$ is a graph that consists of three nodes, $v_1, v_2, v_3$ and has $2L$ arcs of capacity 1 from $v_1$ to $v_2$ and for each element of $S(v)$ an arc from $v_2$ to $v_3$ with that capacity (see Figure~\ref{fig:aon-gadget} (a)). Note that, since all elements of $S(v)$ are between $L$ and $2L-1$ and the incoming flow to $v_2$ is at most $2L$, we cannot have more than one arc from $v_2$ to $v_3$ with nonzero flow. 
		
		The XOR-gadget for an edge $\{v,w\}\in E$ is similar. Here, we have three
		vertices $v_1,v_2,v_3$, with $4L$ arcs with capacity 1 from $v_1$ to $v_2$,
		and for each element from $S(v,w)$ an arc from $v_2$ to $v_3$ with that capacity. By the above argument, we cannot have more than one arc from $v_2$ to $v_3$ with nonzero flow. See Figure ~\ref{fig:aon-gadget} (b).
		
		We will also construct one $s-t$-gadget. Fix an arbitrary vertex $v'\in V(G)$. 
		The $s-t$-gadget is obtained by taking an XOR-gadget for $v'$,
		and adding one to the capacity of each arc from $v'_2$ to $v'_3$. (see
		Figure~\ref{fig:aon-gadget} (c)).
		
		\begin{figure}
			\centering
			\includegraphics[width=0.8\linewidth]{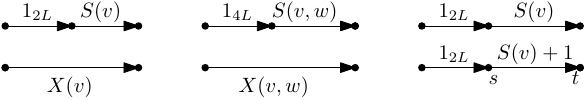}
			\caption{(a) The XOR-gadget for a vertex $v$ and its representation. (b) The XOR-gadget for an edge $\{v,w\}$ and its representation. (c) Changing a XOR-gadget to the $s$-$t$-gadget. $S(v)+1$ denotes the set $\{x+1~|~x\in S(v)\}$.}
			\label{fig:aon-gadget}
		\end{figure}
		
		Let $V=V_1 \cup V_2$ with edges only between $V_1$ and $V_2$.
		
		For each vertex $v\in V_1$, we do the following. Suppose that the neighbours of $v$ are $w_1,\dots, w_t$ in clockwise order. We construct a clockwise cycle that alternates between an XOR-gadget for $v$ and an XOR-gadget for $\{v, w_i\}$ (see Figure~\ref{fig:aon2}).
		
		For example, if the incident edges, in clockwise order of $v \in V_1$ are $\{v,w\}$, $\{v,y\}$,
		$\{v,x\}$, then we have, in clockwise order:
		a XOR-gadget for $v$,
		a XOR-gadget for $\{v,w\}$,
		a XOR-gadget for $v$,
		a XOR-gadget for $\{v,x\}$,
		a XOR-gadget for $v$,
		a XOR-gadget for $\{v,y\}$.
		
		For each vertex in $V_2$ we have a cycle, that is embedded counter clockwise.
		The construction is the same as for vertices in $V_1$, except that we go counter clockwise.
		
		Now, in the final step, we turn one XOR-gadget for $v'$ into an $s$-$t$-gadget as discussed above. We call the graph obtained by this procedure $H$.
		
		In order to show that the graph $H$ is planar, it suffices to show the planarity of the graph $\tilde{H}$, obtained from $H$ by removing parallel edges. %\todo{maybe there is a nicer proof}
		Note that each vertex in $\tilde{H}$ has degree 3, so $\tilde{H}$ cannot have a $K_5$-subdivision. Suppose that $\tilde{H}$ contains a $K_{3,3}$-subdivision $\tilde{H}'$. Since each vertex in $\tilde{H}'$ has degree 3, $\tilde{H}'$ must contain all edges in $\tilde{H}$ incident to this vertex - in particular, it will contain edges from the edge gadget. If we consider the subgraph of $G$ obtained by taking the corresponding edges, we obtain a $K_{3,3}$-subdivision, which leads to a contradiction.  
		
		%TODO: neighbouring vertices share an edge; the construction gives a planar graph.
		
		%\begin{figure}
		%	\centering
			%\includegraphics[width=\textwidth]{figures/all-or-nothing.pdf}
		%	\caption{The construction for a vertex}
		%	\label{fig:aon1}
		%\end{figure}
		
		\begin{figure}
			\centering
			\includegraphics[width=\textwidth]{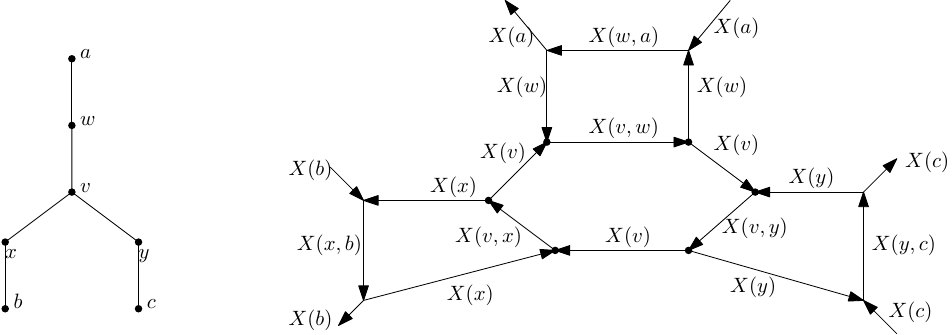}
			\caption{The construction for a vertex with neighbours}
			\label{fig:aon2}
		\end{figure}

		Let $H$ be the resulting graph. Parallel edges can be avoided by subdividing
		each arc into two arcs of the same capacity. Clearly, this gives
		a graph without parallel arcs that forms an equivalent instance.

		\begin{claim}
			There is an all-or-nothing flow in $H$ from $s$ to $t$ of value exactly $1$ if and only if $G$ has a coloring that
			satisfies all constraints on vertices and edges.
		\end{claim}
		
		\begin{claimproof}
			Suppose we have a coloring $c: V\rightarrow \mathcal{C}$ of $G$ that satisfies
			all constraints on vertices and edges. 
			Now, for each $v$ and each $X(v)$ gadget, we send $s(c(v))$ flow through
			the gadget --- we send $1$ flow through $s(c(v))$ many edges of capacity 1,
			and send $s(c(v))$ flow over the edge of capacity $s(c(v))$, and no flow
			through the other edges with capacities larger than one in the gadget.
			For each edge $e = vw$, we send $s(c(v))+s(c(w))$ flow over the edge with
			that capacity from
			pair $x_{in, e}$ to $x_{out,e}$. 
		Suppose that the $s-t$ gadget is a modification of the XOR-gadger for $v'$.	
            In the $s$-$t$ gadget, we do the same as for a standard vertex gadget, except that we send
            $s(c(v'))+1$ flow from $s$ to $t$. Now, all vertices except $s$ and $t$ have their inflow
            equal to their outflow, while the outflow of $s$ is one larger than its inflow, and the
            inflow of $t$ is one larger than its outflow. So, we have an all-or-nothing flow from $s$ to
            $t$ of value exactly 1.
			
			For the other direction, suppose that we have an all-or-nothing flow in $H$ of value 1. Suppose that the $s-t$-gadget is a modification of the XOR-gadget for $v'$. Note that at least one edge with $s$ as tail must have
            non-zero flow, otherwise the flow value is 0. As the inflow to $s$ is at most $2L$, also at most one
            edge from $s$ has non-zero flow. 

            We now claim that for each vertex $w$, the through all $X(w)$ gadgets is equal, and is in $S(w)$. A special
            case is the gadget that contains $s$ and $t$: the flow through the gadget is in $S(v')$, but the edge from
            $s$ to $t$ with non-zero flow has a capacity that is one larger --- this gives the flow value of 1.
            
 We will prove this claim by induction on the distance between $v'$ and $w$ in $G$. For the base case, let $w$ be the vertex such that one endpoint of an XOR-gadget for $w$ is $t$. The flow through the $X(v, w)$ gadget must belong to $S(v,w)$. The outgoing flow of this gadget splits between an $X(w)$-gadget and an $X(v)$-gadget. Since we know that $S(v)$ and $S(w)$ are disjoint subsets of a Sidon set, the outgoing flow of $X(v,w)$ can split in only one way, namely the $X(v)$-gadget gets incoming flow $r$, and the $X(w)$-gadget gets incoming flow $r'\in S(w)$. By repeatedly applying the above argument, we conclude that all the gadgets $X(v)$ have the flow equal to $r$. The induction step is analogous to the base case. 
			
			It is easy to see that the edge constraints are satisfied: namely, consider an edge $uu'\in E(G)$ and let $f$ be the flow through the gadget $X(u,u')$. 
            Since all elements of $S(u, u')$ are between $2L+2$ and $4L$, the flow $f$ has to use exactly one of the parallel edges with capacities from $S(u, u')$.
            Since $f\in S(u, u')$, we know that $f=s(c)+s(c')$ for colors $c\in C(u)$ and $c'\in C(u')$ such that $(c,c')\in C(u,u')$.  %\todo{more detail?}
		\end{claimproof}
        
        It remains to show that the graph $H$ has bounded outerplanarity. Let $v\in V(G)$ be a vertex that is on the outer face after $t$ rounds. We will show by induction on $t$ that the start and endpoints of all $X(v)$ gadgets are on the outer face after at most $2t$ rounds. 
        
        Let $v'\in V(G)$ be a vertex that shares a face with $v$ and is on the outer face after $t-1$ rounds. By induction, all the start and endpoints of all $X(v')$ gadgets are on the outer face after at most $2t-2$ rounds. In particular, the $X(v)$ gadget that shares a face with a $X(v')$ gadget will be on the outer face after at most $2t-1$ rounds. Thus all the start and endpoints of $X(v)$ gadgets will be on the outer face after at most $2t$ rounds. 
	\end{proof}
		
		Let us now prove XALP-completeness for \textsc{All-or-Nothing Flow} for a restricted set of input instances, that we will call \emph{special instances}. The underlying graph of a special instance is planar. The source $s$ and sink $t$ belong to the same face. For each vertex $v$ that is not equal to $s$ or $t$, the total capacity of the edges to
        $v$ equals the total capacity of edges out of $v$.

		\begin{lemma}
			\textsc{All-or-Nothing Flow} is XALP-complete parameterized by outerplanarity for special instances.
            \label{lemma:aonf-specialinstances}
			%\textsc{All-or-Nothing Flow} is XALP-complete parameterized by outerplanarity, for instances with $s$ and $t$ on the same face, with for all vertices except source
			%$s$ and sink $t$ and a third vertex on the same face as $s$ and $t$, the total capacity of incoming edges equals the total capacity of outgoing edges.
		\end{lemma}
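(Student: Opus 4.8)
The plan is to reduce from \textsc{All-or-Nothing Flow} parameterized by outerplanarity with $s$ and $t$ on the same face (Theorem~\ref{theorem:aonf-planar-xalp}). Membership is immediate, since special instances form a subclass of the planar instances and the problem is already in XALP. For hardness I start from the instance $H$ built in the proof of Theorem~\ref{theorem:aonf-planar-xalp} -- planar, of bounded outerplanarity, with target value $1$ and $s,t$ on a common face -- and modify it into an equivalent \emph{special} instance $H'$ in which every internal vertex has equal total in- and out-capacity, without spoiling planarity, the outerplanarity bound, or the location of $s$ and $t$.

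First I would locate the imbalances. Inside an XOR-gadget the only vertices whose in- and out-capacities differ are $v_2$ and $v_3$: at $v_2$ a block of $2L$ (resp.\ $4L$) unit-capacity arcs enters while the large-capacity arcs with capacities in $S(v)$ (resp.\ $S(v,w)$) leave, so $v_2$ has an out-excess; symmetrically, at $v_3$ these large arcs arrive and only the single cycle arc leaves, giving a matching in-excess. The vertex $v_1$ likewise has an out-excess ($2L$ unit arcs leaving against one entering cycle arc). Every remaining vertex lies on a directed cycle carrying a single consistent value $r$ and is already balanced. Thus I can treat each gadget in isolation and pair the out-excess at $v_2$ with the in-excess at $v_3$.

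The key device is a \emph{balancing bundle}: between $v_3$ and $v_2$ I add reverse arcs (and, for the residual imbalance at $v_1$, short stubs) whose capacities are chosen so that (i) the extra capacity they contribute equalizes in- and out-capacity at both endpoints, and (ii) no capacity among them equals any forward capacity and, more strongly, no nonempty subset of the new arcs can simultaneously satisfy flow conservation and the all-or-nothing constraint. Because the reverse arcs run between two already-adjacent vertices of the gadget, they can be drawn in the face separating the two arc-blocks, so planarity is preserved and the outerplanarity of $H$ grows by at most a constant. Any imbalance that cannot be cancelled locally -- in particular at the $s$-$t$ gadget -- is absorbed by forced-zero stubs routed into the $s$-$t$ face, which is legitimate since $s$ and $t$ need not be balanced.

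For correctness I would argue that in every all-or-nothing flow the balancing arcs carry no flow: conservation at $v_2$ and $v_3$, together with the all-or-nothing rule and the Sidon/capacity separation already exploited in Theorem~\ref{theorem:aonf-planar-xalp}, leaves the intended gadget flow as the only feasible option, and any nonzero reverse bundle would force an infeasible capacity match. Hence the all-or-nothing flows of $H'$ of value $1$ correspond exactly to those of $H$, and the \textsc{Binary CSP} equivalence transfers unchanged. The main obstacle is precisely item (ii): selecting the balancing capacities so that they simultaneously equalize both endpoints \emph{and} are provably inert under the all-or-nothing rule, so that no spurious flow of value $1$ is introduced; establishing this inertness, and re-checking the round-by-round outerplanarity bound in the presence of the new arcs, is where the real work lies.
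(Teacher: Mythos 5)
Your overall strategy is the paper's: reduce from Theorem~\ref{theorem:aonf-planar-xalp}, take the graph $H$ built there, and add anti-parallel ``balancing'' arcs so that every vertex other than $s$ and $t$ has equal total in- and out-capacity, then argue the new arcs are inert. But the proposal has a genuine gap exactly where you flag it: you never exhibit the balancing capacities or prove inertness, and you explicitly defer this as ``where the real work lies.'' That deferred step is the entire content of the lemma, and your stated requirement~(ii) --- that no nonempty subset of new arcs can satisfy flow conservation --- is both stronger than needed and not what one should aim for (mixed forward/reverse combinations could in principle circulate; ruling out all subsets in isolation is the wrong target). The paper's resolution is a concrete and simple capacity choice: let $Q=\sum_{c\in D}s(c)$, and for each bundle of parallel forward arcs add a single reverse arc whose capacity is the bundle's total plus $Q$ (plus $2Q$ for the edge-gadget bundles): capacity $2L+Q$ opposite a $1_{2L}$ bundle, $Q+\sum_{c\in C(x)}s(c)$ opposite an $S(x)$ bundle, and $2Q+\sum_{(c,c')\in C(x,y)}(s(c)+s(c'))$ opposite an $S(x,y)$ bundle. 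This simultaneously balances every internal vertex (a routine case check) and makes every new arc have capacity exceeding $Q$; the inertness argument is then global rather than local: in an all-or-nothing flow, if any arc of capacity larger than $Q$ carries flow, conservation at its endpoints forces further capacity-$>Q$ arcs to be used (the forward bundles there are too small to absorb the excess), and this cascade implies an $s$--$t$ flow of value larger than $Q$, contradicting the target value of exactly~$1$.

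Two further points of divergence. First, your ``forced-zero stubs routed into the $s$-$t$ face'' are both unnecessary and problematic: a stub needs an endpoint, any new endpoint must itself be balanced, and routing arcs from arbitrary gadgets to the $s$-$t$ face would in general destroy planarity. In the paper no stubs appear --- the anti-parallel arcs balance every internal vertex locally (in the cycle construction each vertex has exactly one incoming and one outgoing bundle, so the two reverse arcs at it cancel the imbalance), and $s$ and $t$ are simply exempt from the balance condition by the definition of a special instance. Second, the planarity and outerplanarity bookkeeping is easier than you suggest: the new arcs are anti-parallel to existing bundles, so the underlying undirected graph is unchanged up to parallel arcs (which can be removed by subdivision), and no round-by-round re-analysis is needed.
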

		
		\begin{proof}
		Consider the graph $H$ from the proof of Theorem~\ref{theorem:aonf-planar-xalp} (for simplicity, consider the version with parallel edges). We will construct an equivalent special instance $H'$ by taking the graph $H$ and adding arcs.
        Let $Q = \sum_{c\in D} s(c)$.
			We add arcs to $H'$ in the opposite direction of arcs in $H$.
			For a pair of vertices $v$, $w$ with at least one arc $vw$ in $H$, we have the following cases:
			\begin{itemize}
				\item If we have arcs of the form $1_{2L}$, respectively $1_{4L}$ from $v$ to $w$, then add an arc from $w$ to $v$ with capacity $2L+Q$, respectively $4L+2Q$.
				\item If we have a collection of arcs with capacities $S(x)$ for some vertex $x$ from $G$, then add an arc from $w$ to $v$ with capacity $Q+ \sum_{c\in C(x)} s(c)$.
				\item If we have a collection of arcs with capacities $S(x,y)$ for some edge $\{x,y\}$ from $G$, then add an arc from $w$ to $v$ with capacity $2Q+ \sum_{(c,c')\in C(x,y)} (s(c)+s(c'))$.
			\end{itemize}

With a simple case analysis, we can verify that the resulting graph $H'$ is special. 
			
			Clearly, if there is a flow in $H$, then there is a flow of the same value in $H'$ (by not using any of the new arcs). In the reverse direction, we notice that a flow with
			positive value from $s$ to $t$ cannot use any of the new arcs. Indeed, if it uses some arc
			with capacity larger than $Q$, then all incident arcs with capacity larger than $Q$ 
			also must be used, or we violate the flow conservation law. But this implies that
			the flow from $s$ to $t$ is larger than $Q$. Thus the all-or-nothing flow with value exactly 1 in the new network is also an all-or-nothing flow in the old network.
		\end{proof}

\section{Reductions from All-or-Nothing Flow}
	\label{sec:reductionsfromallornothing}
	The \textsc{All-or-Nothing Flow} problem appears to be a good starting point for many other hardness proofs. In all cases, the
	reductions replace edges or arcs by a small gadget with outerplanarity at most 2, and thus the
	graph stays planar and the outerplanarity increases by at most 2. This way, XALP-completeness is obtained 
	for each of the following problems on planar graphs with outerplanarity as parameter. Values are always
	given in unary. 
	
	\label{sec:reductions_from_flow}
	In this section, we build up on results of Section~\ref{sec:flow} and prove hardness of several problems. Figure~\ref{fig:reductions} shows the reductions used in this section. 
	
	\begin{figure}[ht]
		\begin{tikzpicture}[every text node part/.style={text centered}]
\def\u{1}
\tikzset
{
	->-/.style={decoration={markings,mark=at position 0.5 with {\arrow{Straight Barb}}},
		postaction={decorate}}
}
\node[draw](v1) at (0,0){\textsc{All-or-Nothing Flow}};
\node[draw, below=\u of v1, text width=3.5cm](v2){\textsc{All-or-Nothing Flow} cap. $\leq 2$};
\draw[->-] (v1)--(v2);
\node[draw, left=\u of v2, text width=3.5cm](v3){\textsc{Max. Outdegree Orientation}};
\draw[->-](v2)--(v3);
\node[draw, right=\u of v2, text width=4cm](v4){\textsc{Target Outdegree Orientation}};
\node[draw, below=\u of v4, text width=4cm](vco){\textsc{Circulating Orientation}};
\draw[->-](v4)--(vco);
\draw[->-](v2)--(v4);
\node[draw, below right=3*\u and \u of v2, text width=3.5cm](v5){\textsc{Min. Outdegree Orientation}};
\draw[->-] (v2)--(v5);
\node[draw, below=\u of v3, text width=3.5cm](v6){\textsc{Cap. (Red-Blue) Dominating Set}};
\draw[->-](v3)--(v6);
\node[draw, below right=1.3*\u of v3](v7){\textsc{Cap. Vertex Cover}};
\draw[->-] (v3)--(v7);
\node[draw, below=\u of v5](v8){$f$-\textsc{Dominating Set}};
\draw[->-] (v5)--(v8);
\node[draw, below=\u of v8](v9){$k$-\textsc{Dominating Set}};
\draw[->-](v8)--(v9);
\node[draw, below left=0.3*\u of v5](v10){\textsc{Target Set Selection}};
\draw[->-](v5)--(v10);
\end{tikzpicture}
		\caption{Reductions from \textsc{All-or-Nothing Flow}.}
		\label{fig:reductions}
	\end{figure}
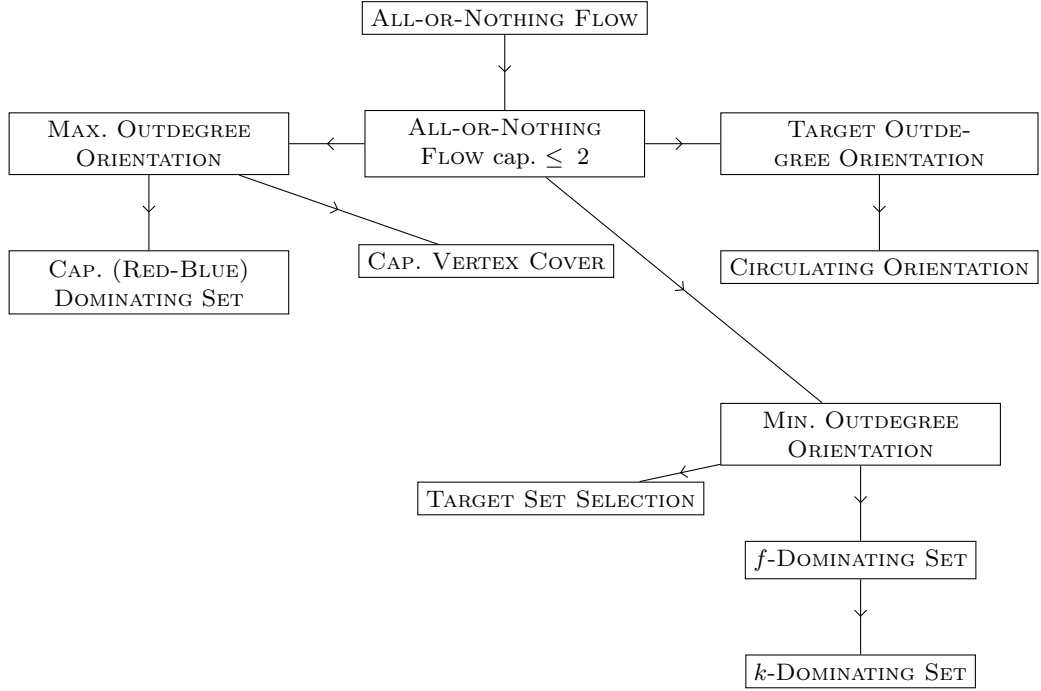
	
	\subsection{All-or-Nothing Flow with Small Arc Capacities}
	\label{ssec:flow12}
	We now look at the hardness of \textsc{All-or-Nothing Flow} for the case that all arc capacities 
	are small. Note that when all arc capacities are 1, then a maximum all-or-nothing flow has
	the same value as a maximum flow, and we can use a standard flow algorithm like the 
	Ford-Fulkerson algorithm to solve the problem in polynomial time. The next case, where capacities
	are 1 and 2 is already equally hard as the case where capacities are given in unary.
	
	We will see that a simple transformation can transform an arc to an equivalent subgraph
	which is 2-outerplanar, and with all arc capacities 1 or 2. As the transformation to
	\textsc{Target Outdegree Orientation} given in the next section
	has edge weights that equal the arc capacities, the hardness results carry
	over to that problem.
	
	Suppose $xy$ is an arc with integer capacity $\gamma\geq 3$. 
	We replace the arc $xy$ by the following subgraph (see Figure~\ref{fig:weighted12too}):
	\begin{itemize}
		\item Take a directed path with $2\gamma-4$ new vertices, say $v_1, v_2, \ldots, v_{2\gamma}-4$.
		\item If $i \in [1,2\gamma-5]$ is odd, then the arc $v_iv_{i+1}$ has capacity 1.
		\item If $i \in [2,2\gamma-6]$ is even, then the arc $v_iv_{i+1}$ has capacity 1.
		\item Take an arc from $x$ to $v_1$ of capacity 2.
		\item For each even $i \in [2,2\gamma-4]$, take an arc from $x$ to $v_i$ of capacity 1.
		\item For each odd $i \in [1,2\gamma-5]$, take an arc from $v_i$ to $y$ of capacity 1.
		\item Take an arc from $v_{2\gamma-4}$ to $y$ of capacity 2.
	\end{itemize}
	
	\begin{lemma}
		Suppose we have a subgraph, as described above, with only $x$ and $y$ adjacent to vertices
		outside the gadget.
		Then for each all-or-nothing flow, either all arcs in the gadgets
		have flow 0, or all arcs in the gadget have flows equal to their capacity.
	\end{lemma}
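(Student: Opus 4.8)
The plan is to prove the dichotomy by a propagation argument along the path $v_1,\dots,v_{2\gamma-4}$, showing that a single bit $b\in\{0,1\}$ controls the entire gadget: every arc of capacity $c$ will be forced to carry flow exactly $bc$. Fix an all-or-nothing flow $f$. Since only $x$ and $y$ are incident to the outside, flow conservation holds at every internal vertex $v_i$, and by the all-or-nothing property each arc carries either $0$ or its full capacity. Writing $p_i=f(v_iv_{i+1})$ for the flow on the $i$th path arc, the goal is to show that $p_i$ (and each incident cross arc) is determined by $b$.

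First I would establish the base case at $v_1$, whose unique incoming arc is $xv_1$ (capacity $2$) and whose outgoing arcs are $v_1v_2$ and $v_1y$ (each capacity $1$). Conservation gives $f(xv_1)=p_1+f(v_1y)$ with the left side in $\{0,2\}$ and each term on the right in $\{0,1\}$, so the only solutions are all-zero or all-saturated; setting $b=f(xv_1)/2$ yields $p_1=b$ and $f(v_1y)=b$. The inductive step then splits according to the parity of $i$. At an even vertex the two incoming arcs (the previous path arc and the arc from $x$, each of capacity $1$) feed a single outgoing path arc of capacity $2$, and the identity $1+1=2$ means conservation admits only $0+0\to 0$ or $1+1\to 2$. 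At an odd vertex (for $i\ge 3$) the roles are mirrored: a single incoming path arc of capacity $2$ splits across two outgoing capacity-$1$ arcs ($v_iv_{i+1}$ and $v_iy$), again forcing $0\to 0+0$ or $2\to 1+1$. In each case, if the incoming path arc carries $b$ times its capacity, then so do the outgoing path arc and the incident cross arc; the terminal vertex $v_{2\gamma-4}$ is treated exactly as an even vertex, with its capacity-$2$ outgoing arc directed into $y$.

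The crux of the argument --- and the only place where the precise construction matters --- is the rigidity of these local forcings: the capacities are chosen so that at every internal vertex the incoming and outgoing capacities balance in the pattern $\{1,1\}$ versus $\{2\}$, and this balance, together with the all-or-nothing constraint, rules out every ``intermediate'' routing. I expect the main care to be in handling the index parities cleanly (especially the two endpoints $v_1$ and $v_{2\gamma-4}$) rather than in any single computation. Once the induction runs to completion, every arc of the gadget carries $b$ times its capacity, so $b=0$ is the all-zero flow and $b=1$ is the fully saturated flow, which is exactly the claimed dichotomy.
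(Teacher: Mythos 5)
Your proof is correct and rests on exactly the same key observation as the paper's: at every internal vertex the incident capacities balance as $\{1,1\}$ versus $\{2\}$, so each $v_i$ is locally either fully saturated or fully idle, and connectivity of the path forces a single global bit. The only (cosmetic) difference is that the paper finishes with a minimal-counterexample argument --- a vertex with throughput $0$ adjacent to one with throughput $2$ is impossible --- whereas you propagate the bit $b$ by induction from $v_1$; note also that your reading of the gadget silently corrects the paper's typo listing the even path arcs with capacity $1$ instead of $2$.
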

	
	\begin{proof}
		Each vertex $v_i$ in the gadget
		has either two incoming arcs of capacity 1 and one outgoing arc of
		capacity 2, or one incoming arc of capacity 2 and two outgoing arcs of capacity 1. So either
		all three arcs incident to $v_i$ are used, or all three have flow 0. 
		Suppose for contradiction that the claim from the lemma does not hold.    Then there must be a vertex $v_i$ with inflow and outflow 0, and a vertex $v_{i'}$ with
		inflow and outflow 2. Moreover, there must be such a pair that is adjacent ($|i-i'|=1$),
		which gives a contradiction.
	\end{proof}

	\begin{figure}[ht]
		\centering
		\includegraphics{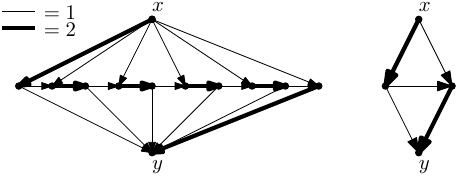}
		\caption{Examples of the gadget for arcs with capacities $>2$. Fat arcs have capacity 2; thin arcs have capacity 1. The left gadget in the example functions as an arc with capacity 7; the right for
			an arc with capacity 3.}
		\label{fig:weighted12too}
	\end{figure}
	
	\begin{corollary}
		The \textsc{All-or-Nothing Flow} problem with all arcs of capacity 1 or 2 is:
		\begin{enumerate}
			\item XALP-complete for planar graphs with outerplanarity as parameter.
			\item XNLP-complete with pathwidth as parameter.
			\item XALP-complete with treewidth as parameter.
		\end{enumerate}
		\label{corollary:aon12}
	\end{corollary}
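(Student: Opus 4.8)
The plan is to derive all three statements from the gadget replacement just analyzed, applied to the three known hardness results for \textsc{All-or-Nothing Flow} with general (unary) capacities. Membership is immediate: restricting arc capacities to $\{1,2\}$ yields a special case of the general problem, so the restricted problem inherits membership in XALP parameterized by treewidth (and hence by outerplanarity, since bounded outerplanarity implies bounded treewidth) from~\cite{BodlaenderGJPP22a}, and membership in XNLP parameterized by pathwidth from~\cite{BodlaenderCW22a}. Thus only hardness requires work.

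For hardness I would reduce from \textsc{All-or-Nothing Flow} with capacities given in unary --- using Theorem~\ref{theorem:aonf-planar-xalp} for outerplanarity, \cite{BodlaenderCW22a} for pathwidth, and \cite{BodlaenderGJPP22a} for treewidth --- and replace every arc $xy$ of capacity $\gamma \geq 3$ by the subgraph described above, leaving the target value $r$ unchanged. Correctness follows from the preceding Lemma: it guarantees that every all-or-nothing flow either saturates the entire gadget or leaves it empty, so the gadget behaves exactly like a single all-or-nothing arc. A short count confirms the capacity: the arcs leaving $x$ into the gadget have total capacity $2 + (\gamma - 2) = \gamma$, and symmetrically the arcs entering $y$ total $\gamma$, so a saturated gadget carries flow $\gamma$ and an empty one carries $0$, matching the original arc of capacity $\gamma$. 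Hence all-or-nothing flows in the new instance of value $r$ correspond bijectively to those in the original of value $r$.

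It remains to check that the replacement preserves the three parameters up to additive constants. Planarity is clear, since the gadget is itself planar and is glued to the rest of the graph only at $x$ and $y$; the gadget is $2$-outerplanar, so replacing an arc raises the outerplanarity by at most $2$. For pathwidth and treewidth, the key observation is that although the gadget contains $2\gamma - 4$ internal vertices, it is path-like: every internal vertex $v_i$ is adjacent only to its path-neighbours $v_{i-1}, v_{i+1}$ and to exactly one of $x, y$. Given any path or tree decomposition of the original instance, I would take a bag containing both $x$ and $y$ (such a bag exists because $xy$ is an arc) and expand it into a sequence of bags that keep $x$ and $y$ resident while sliding a window of at most two consecutive gadget vertices along $v_1, \dots, v_{2\gamma-4}$; this adds only a constant to the width. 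Carrying this out for every arc of capacity at least $3$ increases the pathwidth and treewidth by $O(1)$.

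The main obstacle is precisely this last bookkeeping step: one must verify that every edge of the gadget lies in some bag of the expanded decomposition and that the connectedness condition is maintained, in particular that retaining $x$ and $y$ throughout the window suffices to cover the ``long-range'' arcs $x v_i$ and $v_i y$. Once this is established, the reduction is a pl-reduction --- each gadget is constructible in logarithmic space, recomputing the indices of its vertices and arcs as needed --- which completes all three parts.
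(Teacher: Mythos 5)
Your proposal is correct and matches the paper's proof in all essentials: hardness by replacing each arc of capacity $\gamma\geq 3$ with the gadget, correctness via the all-or-nothing lemma, and a bounded increase in outerplanarity, pathwidth, and treewidth (the paper's pathwidth argument is exactly your sliding window --- it replaces a bag containing $x$ and $y$ by $2\gamma-5$ copies with $v_i,v_{i+1}$ added to the $i$th). The only cosmetic differences are that the paper bounds the outerplanarity increase by $1$ rather than $2$, and for treewidth it attaches a width-$3$ decomposition of the gadget to a bag containing $x$ and $y$ instead of reusing the window construction; both yield the same conclusion.
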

	
	\begin{proof}
		For each of the hardness proofs, we replace each arc with capacity more than 2 by the gadget
		described above. Notice that this step increases the outerplanarity of a planar
		graph by at most 1 (the vertices $v_i$ are at most one level more away from the outer face
		than $x$ or $y$) and the pathwidth of a graph by at most 2 (take a bag containing $x$ and $y$, replace it by $2\gamma-5$ copies, and add $v_i$, $v_{i+1}$ to the $i$th copy). 
		Also, the operation does not increase
		the treewidth of a graph whose treewidth is at least three: the gadget has treewidth at most 3, and
		we can attach a tree decomposition of a gadget to a bag that contains $x$ and $y$.
	\end{proof}
	
	\subsection{Target Outdegree Orientation}
	\label{ssec:too}
	In this section, we look at the problems \textsc{Target Outdegree Orientation}, 
	\textsc{Maximum Outdegree Orientation} and \textsc{Minimum Outdegree Orientation}, and show that they
	are XALP-complete when parameterized by outerplanarity. Given a directed graph $G=(V,E)$ with
	edge weights $w(e)$, the \emph{weighted outdegree} of a vertex is the sum of the weights
	of all arcs with $v$ as tail: $\sum_{vx\in E} w(vx)$.
	
	\defproblem{\textsc{Target Outdegree Orientation}}{Undirected graph $G=(V,E)$, a positive integer weight $w(e)$ for each edge $e\in E$, and for each vertex $v\in V$, a 
		positive integer target $t(v)$.}{Is there an orientation of $G$, such that for each $v$, the weighted outdegree of $v$ equals $t(v)$?}
	
	The \textsc{Minimum Outdegree Orientation}
	and \textsc{Maximum Outdegree Orientation} problem are defined as above, except that now the weighted
	outdegree of each vertex $v$ must be at least $t(v)$, respectively at most $t(v)$.
	Each of these three problems was shown to be XNLP-complete with pathwidth as parameter~\cite{BodlaenderCW22a}, FPT with tree partition width as parameter~\cite{BodlaenderCW22a}, and XALP-complete with treewidth as parameter~\cite{BodlaenderGJPP22a}.
	
	To show hardness of each of the three problems, we use a simple transformation
	from \textsc{All-or-Nothing Flow}: drop directions of all arcs,
	take weights equal to capacities, and
	choose targets for vertices in an appropriate way. Our transformation is similar to,
	but somewhat simpler than the transformation given in \cite{BodlaenderCW22a}. 
	
	\smallskip
	
	Suppose we have a flow network $(G, s, t, c)$ with $G=(V,E)$ a directed graph,
	$c: E \rightarrow \mathbb{Z}^+$ the capacity function, $s,t\in V$. Suppose we want to
	decide if there is a flow from $s$ to $t$ of value exactly $r$.
	
	We assume that there are no parallel arcs, and also for each pair of vertices $x,y\in V$,
	at most one of the arcs $xy$ and $yx$ is in $E$; if not, we can obtain an equivalent instance
	by subdividing arcs and giving both new arcs the same capacity as the old arc.
	
	Let $H=(V,F)$ be the undirected graph, obtained by dropping directions of arcs in $E$,
	and let the weight $w(e)$ of an edge be equal to the capacity of the directed variant of $e$ in $G$.
	
	For each vertex $v\in V\setminus\{s,t\}$, set $t(v) = \sum_{xv\in E} c(xv)$.
	Set $t(s)=\sum_{xs\in E} c(xs) + r$, and $t(t)= \sum_{xt\in E} c(xt) - r$.
	
	\begin{lemma}
		Let $G$ and $H$ be as above. The following are equivalent.
		\begin{enumerate}
			\item There is an all-or-nothing flow of value $r$ in $G$.
			\item $H$ has on orientation with
			each vertex weighted outdegree exactly $t(v)$.
			\item $H$ has on orientation with
			each vertex weighted outdegree at least $t(v)$.
			\item $H$ has on orientation with
			each vertex weighted outdegree at most $t(v)$.
		\end{enumerate}
	\end{lemma}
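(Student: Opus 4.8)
The plan is to exhibit a direct bijection between all-or-nothing flows in $G$ and orientations of $H$, use it to settle the equivalence of (1) and (2), and then reduce the two inequality variants (3) and (4) to the exact variant (2) by a simple counting argument.

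First I would prove that (1) is equivalent to (2). Given an all-or-nothing flow $f$ in $G$, I orient each edge $\{x,y\}$ of $H$ along the direction of its underlying arc $xy\in E$ exactly when that arc is \emph{saturated} (i.e. $f(xy)=c(xy)$), and in the reverse direction when $f(xy)=0$. By the standing assumption that $G$ has neither parallel nor antiparallel arcs, every undirected edge of $H$ comes from a unique arc, so this rule is well defined and gives a bijection between all-or-nothing flows and orientations of $H$. I would then compute the weighted outdegree of a vertex $v$ under this orientation: it receives a contribution $c(vx)$ from every saturated outgoing arc $vx$ and a contribution $c(xv)$ from every \emph{unsaturated} incoming arc $xv$. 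Writing $\mathrm{in}(v)=\sum_{xv\in E}c(xv)$ and letting $\delta(v)$ denote the net outflow (outflow minus inflow) of $f$ at $v$, a short rearrangement shows that the weighted outdegree of $v$ equals $\mathrm{in}(v)+\delta(v)$.

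The main obstacle is the bookkeeping at $s$ and $t$, where flow conservation no longer holds. For every internal vertex $v\notin\{s,t\}$ conservation gives $\delta(v)=0$, hence weighted outdegree $\mathrm{in}(v)=t(v)$, matching the target. For $s$, the definition of the flow value yields $\delta(s)=r$, so the weighted outdegree is $\mathrm{in}(s)+r=t(s)$; for $t$, global conservation (net inflow at $t$ equals the value $r$) gives $\delta(t)=-r$, so the weighted outdegree is $\mathrm{in}(t)-r=t(t)$. Thus $f$ has value exactly $r$ if and only if the associated orientation has weighted outdegree exactly $t(v)$ at every vertex, which establishes (1)$\Leftrightarrow$(2). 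I expect verifying the sign conventions and invoking the value definition here (which, as noted in the paper, accounts for both outgoing and incoming flow at $s$) to be the step requiring the most care.

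Finally I would handle (3) and (4) by observing that the relevant totals coincide. For \emph{any} orientation of $H$ we have $\sum_{v}\mathrm{woutdeg}(v)=\sum_{e\in F}w(e)=\sum_{e\in E}c(e)$, since each edge contributes its weight to exactly one vertex's outdegree. On the other hand, a direct computation gives $\sum_{v}t(v)=\sum_{v}\mathrm{in}(v)=\sum_{e\in E}c(e)$, because the $+r$ at $s$ and the $-r$ at $t$ cancel. Hence the two global sums are equal. Consequently, an orientation satisfying $\mathrm{woutdeg}(v)\ge t(v)$ for all $v$ must in fact satisfy equality everywhere (any slack at one vertex would force a deficit at another), and symmetrically for the $\le$ constraints. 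This shows (3)$\Leftrightarrow$(2) and (4)$\Leftrightarrow$(2), completing the chain of equivalences.
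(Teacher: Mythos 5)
Your proposal is correct and takes essentially the same route as the paper: the identical orientation rule (saturated arcs keep their direction, zero-flow arcs are reversed), the same target bookkeeping at $s$, $t$, and internal vertices, and the same global counting argument $\sum_v \mathrm{woutdeg}(v)=\sum_{e}w(e)=\sum_v t(v)$ to collapse (3) and (4) to (2). Your identity $\mathrm{woutdeg}(v)=\mathrm{in}(v)+\delta(v)$ merely packages the paper's two separate $\alpha$/$\beta$ computations for the two directions of (1)$\Leftrightarrow$(2) into one formula (note only that, strictly, your ``bijection'' is with capacity-respecting $\{0,c(e)\}$-valued pseudoflows, conservation being exactly what the targets at internal vertices enforce).
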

	
	\begin{proof}
		1 $\Rightarrow 2$:
		Consider a flow $f$. If an arc $xy$ has positive flow, then we direct the edge $\{x,y\}$ in
		$H$ in the same way as the arc in $G$, i.e. from $x$ to $y$. If $xy$ has 0 flow, then
		we direct the edge in the opposite direction, i.e. from $y$ to $x$.
		
		Consider a vertex $v\in V\setminus \{s,t\}$. 
		Suppose the total flow sent to $v$ by $f$ (i.e. the \emph{inflow} $\sum_{xv\in E} f(xv)$) is $\alpha$.
		Then, $t(v)-\alpha$ weight of incoming edges is not used; thus, the incoming arcs in $G$
		give weighted outdegree $t(v)-\alpha$ in $H$. The total weight of used outgoing arcs from $v$ is $\alpha$; these are directed out of $v$, while all other outgoing arcs are directed in the opposite direction; this gives a contribution of $\alpha$ to the outdegree.
		So, the total outdegree equals $t(v)-\alpha+\alpha=t(v)$.
		
		If the inflow of $s$ by $f$ is $\alpha$, then the outflow of $s$ by $f$ is $\alpha+r$;
		we thus have $t(s)-r-\alpha$ weight of incoming edges not used by $f$, thus incoming arcs
		amount 
		to weighted outdegree $t(s)-r-\alpha$; 
		The analysis for $t$ is similar.
		
		2 $\Rightarrow$ 1:
		Suppose we have an orientation with each vertex meeting its target. For each arc $xy$,
		send flow equal to its capacity over it,  if and only if the edge $\{x,y\}$ is oriented 
		from $x$ to $y$ (i.e., in the same way as the corresponding arc in $H$); otherwise we
		send 0 flow over the arc.
		Consider a vertex $v$.
		Suppose the total
		weight of arcs $xv$ whose edges are directed as $xv$ is $\beta$. Then, the weight of
		the edges with an incoming arc to $v$ which are directed out of $v$ is $t(v)-\beta$, so
		the weight of the edges $vy$ with an outgoing arc from $v$ who are directed as $vy$ is
		$\beta$; so the inflow and outflow of $v$ equal $\beta$.
		The analysis for $s$ and $t$ is similar.
		
		2 $\Rightarrow$ 3, 4 is trivial: use the same orientation.
		
		3 $\Rightarrow$ 2: we can use the same orientation. Note that the sum over all vertices of the weighted outdegree
		must equal the sum of the weights of all edges (as each edge is directed out of exactly one vertex in an orientation), and the latter sum equals $\sum_{v\in V} t(v)$. Thus
		if we have a vertex whose weighted outdegree is strictly greater than its target, then there must be
		another vertex whose weighted outdegree is strictly smaller than its target, which leads to a contradiction.
		
		4 $\Rightarrow$ 2 is similar as the previous case.
	\end{proof}

	\begin{corollary}
		The \textsc{Target Outdegree Orientation}, 
		\textsc{Minimum Outdegree Orientation}, and 
		\textsc{Maximum Outdegree Orientation} problems with all arcs of capacity 1 or 2 is:
		\begin{enumerate}
			\item XALP-complete for planar graphs with outerplanarity as parameter.
			\item XNLP-complete with pathwidth as parameter.
			\item XALP-complete with treewidth as parameter.
		\end{enumerate}
	\end{corollary}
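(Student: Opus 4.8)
The plan is to obtain both membership and hardness by recycling the two immediately preceding results: the equivalence lemma relating \textsc{All-or-Nothing Flow} to the three orientation variants, and Corollary~\ref{corollary:aon12}, which gives hardness of \textsc{All-or-Nothing Flow} when all capacities are $1$ or $2$. For membership, I would argue that each of the three orientation problems is already known to lie in XNLP (parameterized by pathwidth) and XALP (parameterized by treewidth) \cite{BodlaenderCW22a,BodlaenderGJPP22a}; restricting the weights to $\{1,2\}$ only removes instances, so it cannot leave the class, and since outerplanarity $k$ implies treewidth at most $3k-1$, membership parameterized by treewidth transfers to outerplanarity exactly as explained in Section~\ref{sec:notation}.

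For hardness, I would reduce from \textsc{All-or-Nothing Flow} with all capacities in $\{1,2\}$, which is hard in all three regimes by Corollary~\ref{corollary:aon12}. Given such a network $(G,s,t,c)$ and target value $r$, I would first apply the same anti-parallel/parallel-arc preprocessing used in the lemma: subdivide arcs so that no two vertices are joined by parallel or anti-parallel arcs. The key observation here is that subdividing an arc replaces it by two arcs of identical capacity, so all capacities stay in $\{1,2\}$, the graph stays planar, and the outerplanarity, pathwidth and treewidth each grow by at most a constant. Then I would invoke the transformation of the lemma verbatim: drop orientations to obtain the undirected graph $H$, set each weight equal to the capacity of the corresponding arc (so all weights are $1$ or $2$), and set the targets $t(v)$, $t(s)$, $t(t)$ as prescribed there. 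Because the lemma proves the equivalence of the flow condition with all three orientation conditions simultaneously, the single instance $H$ serves as the reduction target for \textsc{Target}, \textsc{Minimum}, and \textsc{Maximum Outdegree Orientation} at once.

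Finally I would check that this map is a genuine pl-reduction in each parameterization. Since $H$ is exactly the underlying undirected graph of the preprocessed network, its outerplanarity, pathwidth and treewidth differ from those of the flow instance by at most a constant, which bounds the new parameter by a function of the original one; the targets are at most $2\deg(v)\le 2n$ and are thus writable in unary, and the whole construction runs in logarithmic space via the recomputation technique of Section~\ref{sec:notation}. The only step that needs genuine care --- and which I expect to be the main (if modest) obstacle --- is verifying that the preprocessing eliminating parallel and anti-parallel arcs simultaneously preserves planarity, keeps every capacity in $\{1,2\}$, and does not inflate the width parameters; once that is confirmed, the remainder is bookkeeping inherited directly from the lemma.
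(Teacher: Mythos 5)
Your proposal is correct and matches the paper's proof: the paper likewise takes the \textsc{All-or-Nothing Flow} instance with capacities in $\{1,2\}$ from Corollary~\ref{corollary:aon12}, drops arc directions, sets weights equal to capacities and targets as in the preceding lemma (whose four-way equivalence covers all three orientation variants at once), with the parallel/anti-parallel arc subdivision and the membership citations handled exactly as you describe. Your additional bookkeeping about planarity, width parameters, and logspace is the routine verification the paper leaves implicit.
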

	
	\begin{proof}
		Start with an instance of \textsc{All-or-Nothing Flow} with all capacities 1 or 2, see
		Corollary~\ref{corollary:aon12}. Then, set the targets as described above and drop the directions of arcs.
	\end{proof}

 \subsection{Circulating Orientation}
		\label{ssec:co}
		The \textsc{Circulating Orientation} problem is defined as follows. 

        \defproblem{\textsc{Circulating Orientation}}{Undirected graph $G=(V, E)$ and a positive integer weight $w(e)$ for each edge $e\in E$}{Is there an orientation of $G$ such that for each $v\in V$ the weighted outdegree of $v$ equals the weighted indegree of $v$?}

        This problem was shown to be XALP-complete parameterized by treewidth~\cite{BodlaenderGJPP22a} and XNLP-complete parameterized by pathwidth~\cite{BodlaenderCW22a}. The problem is also known as \textsc{Flow Orientation} and it is NP-hard on planar graphs~\cite{didimo2019hv}. In~\cite{JansenKKLMS23}, it was shown that the problem is W[1]-hard.
        
		\begin{theorem}
			\textsc{Circulating Orientation} is XALP-complete with outerplanarity as parameter.
		\end{theorem}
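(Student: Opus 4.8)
Membership in XALP is immediate: $k$-outerplanar graphs have treewidth $O(k)$, and \textsc{Circulating Orientation} is XALP-complete parameterized by treewidth~\cite{BodlaenderGJPP22a}, so any problem in XALP for treewidth is in XALP for outerplanarity. For hardness, the plan is to reduce from \textsc{All-or-Nothing Flow} on the \emph{special instances} of Lemma~\ref{lemma:aonf-specialinstances}: the underlying graph is planar with $s$ and $t$ on a common face, every vertex $v\notin\{s,t\}$ has equal total in-capacity $c^-(v)$ and out-capacity $c^+(v)$, and the question is whether a flow of value exactly $1$ exists. Given such an instance $(G,s,t,c)$, I would build the undirected weighted graph $H$ by forgetting the direction of each arc and giving each resulting edge weight equal to the capacity of the corresponding arc (subdividing to remove parallel edges if desired).

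The first step is the usual dictionary between orientations and all-or-nothing flows: orienting an edge along its original arc encodes ``this arc carries its full capacity'', and orienting it against the arc encodes ``this arc carries $0$''. A one-line computation then shows that at every vertex $v$ the weighted out-degree minus the weighted in-degree equals $(c^-(v)-c^+(v)) + 2(\text{outflow}(v)-\text{inflow}(v))$. On a special instance the first term vanishes for every internal vertex, so balance at $v$ is \emph{exactly} flow conservation at $v$; hence circulating orientations of the edges of $H$ are in bijection with valid all-or-nothing flows of $G$. At $s$ and $t$ the same identity shows that the out-minus-in value equals a fixed constant plus $2$ times the flow value, with opposite signs at the two vertices, so it remains only to force this value to be $1$.

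To do this I would add a single edge $\{s,t\}$; as $s$ and $t$ share a face this keeps $H$ planar and increases its outerplanarity by only a constant, in line with the rest of this section. To make the bookkeeping clean I would first normalise so that $c^-(s)=c^+(s)$ (whence $c^-(t)=c^+(t)$ by global capacity balance), by appending high-capacity reverse arcs at $s$ and $t$ exactly as in Lemma~\ref{lemma:aonf-specialinstances}; no flow of small value can use such arcs, so the set of achievable values stays $\{0,1\}$. With this normalisation the new edge gets weight $2$: oriented from $t$ to $s$, balance at $s$ and $t$ forces the flow value to be exactly $1$, while oriented from $s$ to $t$ it would force value $-1$. \textbf{The main obstacle is precisely this second orientation:} I must be sure the alternative value it demands is infeasible, since otherwise a spurious circulating orientation (for instance the all-zero flow, whose value is $0$) could certify a ``no''-instance. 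The normalisation above is designed to resolve this, as value $-1$ is impossible for these instances; verifying this, and checking both directions of the final equivalence, is the only genuinely delicate part — the planarity and outerplanarity bounds then follow as in Theorem~\ref{theorem:aonf-planar-xalp}.
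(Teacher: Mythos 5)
Your membership argument and the overall plan --- reduce from \textsc{All-or-Nothing Flow} on the special instances of Lemma~\ref{lemma:aonf-specialinstances}, drop arc directions with weights equal to capacities, and exploit that capacity-balance at internal vertices turns the circulating condition into flow conservation --- are sound, and your identity (weighted outdegree minus weighted indegree equals $(c^-(v)-c^+(v)) + 2(\mathrm{outflow}(v)-\mathrm{inflow}(v))$) is correct. Your route also genuinely differs from the paper's: the paper does not add a single $\{s,t\}$ edge, but reuses the pipeline of \cite{BodlaenderCW22a}, first passing through \textsc{Target Outdegree Orientation} (weights equal to \emph{half} the arc capacities, targets equal to half the in-capacity, which on special instances is half the total incident weight), and then adding a super source $s'$ and super sink $t'$ with demand-correcting edges to the only two unbalanced vertices $s$ and $t$; planarity survives because $s$ and $t$ share a face, and outerplanarity grows by at most one. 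Crucially, the exact-target formulation $t(s)=c^-(s)+r$ is not invariant under reversing all edges, which is why the paper never meets the sign issue you run into below.

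The step you yourself flag as ``the only genuinely delicate part'' is where your argument fails: the claim that value $-1$ is infeasible after normalising $c^-(s)=c^+(s)$ is false. Once \emph{every} vertex, including $s$ and $t$, has equal total in- and out-capacity, the complementation map $f \mapsto c-f$ is an involution on all-or-nothing flows: conservation of $c-f$ at an internal vertex $v$ follows from $c^-(v)=c^+(v)$, and its value at $s$ is $\bigl(c^+(s)-c^-(s)\bigr)-\mathrm{val}(f) = -\mathrm{val}(f)$. Hence value $-1$ is feasible exactly when value $+1$ is, the set of achievable values becomes symmetric under negation (so it does not ``stay $\{0,1\}$'' either), and the justification you propose cannot work. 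Ironically, this very symmetry can rescue the reduction --- the orientation of the new edge from $s$ to $t$ certifies a value-$(-1)$ flow whose complement has value $+1$, so yes/no is still preserved --- but turning that into a proof requires exactly the pieces you omitted: (i) showing that the high-capacity arcs are used either all together or not at all (the capacity-exceeding-$Q$ propagation argument of Lemma~\ref{lemma:aonf-specialinstances} together with connectivity of the reverse arcs), so that complementation reduces any relevant flow to one supported on the original arcs of the graph $H$ of Theorem~\ref{theorem:aonf-planar-xalp}; and (ii) a local argument inside $H$ that value $-1$ is impossible for such flows (e.g.\ at the in-neighbour $v'_1$ of $s$: a value-$(-1)$ flow forces throughput between $1$ and $2L+1$ at $v'_1$, while all its incoming capacities exceed $2L+1$). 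As written, the proposal asserts a false lemma at its load-bearing step, so it is not a correct proof, although the construction itself appears salvageable with the complementation argument in place of the impossibility claim.
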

		
		\begin{proof}
			Consider the reduction from \cite{BodlaenderCW22a} from \textsc{All-or-Nothing Flow} to \textsc{Circulating Orientation}, building upon
			insights by Jansen et al.~\cite{JansenKKLMS23}. 
			
			Take a special instance of \textsc{All-or-Nothing Flow}, as in Lemma~\ref{lemma:aonf-specialinstances}.
			
			The first step makes an instance of \textsc{Target Outdegree Orientation}.
			Here, we first transform each arc to an undirected edge with weight equal to half the capacity of the arc. Then,
			each vertex is given a target value. For all vertices, except source $s$ and
			sink $t$, this target value is the sum of the capacities of the arcs directed
			to $v$, divided by 2. 
			Consider a vertex $v \in V\setminus \{s,t\}$. The target value is equal to
			half the sum of the capacities of the arcs directed to $v$, but, in our current
			instance, this is also equal to half the sum of the capacities of the arcs
			directed out of $v$. This means that in the instance, we look for an
			orientation with a target that is equal to half the total weight of the incident edges.
			This holds for all vertices, except $s$ and $t$, and these two vertices
			share a face in the plane embedding.
			
			In the second step, we transform the instance of \textsc{Target Outdegree Orientation} to an instance of \textsc{Circulating Orientation}.
			Here, we add new `super source vertex', say $s'$, and a new super sink
			vertex, say $t'$. For each vertex, we compute its demand, which is the
			total weight of incident edges minus twice its target. Now, from the discussion
			above, we see that the demand is 0 for all vertices except $s$ and $t$.
			
			In the final step, we add edges with positive weights from $s'$ to $t'$,
			from $s'$ to each vertex with a negative demand, and from $t'$ to each vertex
			with a positive demand. Because $s$ and $t$ are the only two vertices
			which possibly have a non-negative demand, and they are on the same face, we
			can add these edges keeping a planar graph. This step increases the outerplanarity by at most one.
		\end{proof}
    
	\subsection{Capacitated (Red-Blue) Dominating Set}
	\label{ssec:capds}
	We can use the \textsc{Minimum Outdegree Orientation} and
	\textsc{Maximum Outdegree Orientation} problems as starting problems for 
	reductions to several problems, including \textsc{Capacitated Red-Blue Dominating Set},
	\textsc{Capacitated Dominating Set}, \textsc{Capacitated Vertex Cover}m
	\textsc{Target Set Selection}, and \textsc{$f$-Dominating Set}.
	In each of these cases, we take an instance of \textsc{Minimum Outdegree Orientation} 
	or \textsc{Maximum Outdegree Orientation}
	and replace each edge by an appropriate planar subgraph.
	We start with the capacitated variants of \textsc{Dominating Set} and \textsc{Vertex Cover},
	which use existing transformations from the literature.
	
	\defproblem{\textsc{Capacitated Red-Blue Dominating Set}}{Bipartite graph $G=(R\cup B, E)$, with each `blue' vertex in $v\in B$ a positive integer
		capacity $c(v)$, and an integer $k$}{Is there a set $S\subseteq B$ of at
		most $k$ blue vertices, and an assignment $f: R\rightarrow S$ of each red vertex to
		a (blue) neighbour in $S$, such that each vertex in $v\in S$ has
		at most $c(v)$ red neighbours
		assigned to it?}
	
	\defproblem{\textsc{Capacitated Dominating Set}}{Graph $G=(V, E)$, with each vertex in $v\in R$ a positive integer
		capacity $c(v)$, and an integer $k$}{Is there a set $S\subseteq V$ of at
		most $k$ vertices, and an assignment $f: V\setminus S\rightarrow S$ of each vertex not in $S$ to
		a neighbour in $S$, such that each vertex in $v\in S$ has
		at most $c(v)$ neighbours
		assigned to it?}
	
	\textsc{Capacitated Red-Blue Dominating Set} and \textsc{Capacitated Dominating Set} are
	well studied problems. Both problems are XNLP-complete with pathwidth as parameter~\cite{BodlaenderGJJL22} and XALP-complete with treewidth as parameter~\cite{BodlaenderGJPP22a}. 
	An earlier W[1]-hardness proof for treewidth as parameter can
	be found in \cite{DomLSV08}.
	\textsc{Capacitated Dominating Set} was shown to be W[1]-hard for planar graphs with the solution size as parameter~\cite{BodlaenderLP09}. It follows
	from a lower bound proof in ~\cite{FominGLS14} that \textsc{Capacitated Red-Blue Dominating Set}
	is W[1]-hard with feedback vertex set as parameter.

	In \cite{BodlaenderGJJL22}, the following transformation from  \textsc{Maximum Outdegree Orientation} to
	\textsc{Capacitated Red-Blue Dominating Set} is given: each edge $e=\{v,w\}$ with capacity $\gamma$ is
	replaced by a subgraph with $2\gamma+3$ additional vertices. Take $\gamma$ red vertices adjacent to $v$ and
	$\gamma$ red vertices adjacent to $w$. Take a new blue vertex adjacent to the each vertex in the
	first set of $\gamma$ red vertices, and a new blue vertex adjacent to the each vertex in the
	second set of $\gamma$ red vertices. Then take one red vertex, adjacent to each of the latter two
	blue vertices. Finally, we colour all original vertices blue. The blue vertices in the edge gadgets
	have capacity $\gamma+1$; the original vertices $v\in V$ have capacity $t(v)$. 
	We ask if there is a red-blue dominating set of size $|V|+|E|$. See Figure~\ref{fig:capredblue} (left side) for an example.
	
	Correctness of the transformation
	is easy to see (cf.~\cite{BodlaenderGJJL22}); the intuition is that there is always an optimal
	solution $S$ that contains all original vertices in $V$, and for each edge gadget, one of the blue
	vertices (marked $a$ and $b$ in Figure~\ref{fig:capredblue}) is chosen. If $a$ is chosen, we direct
	the edge from $y$ to $x$; if $b$ is chosen, the edge is directed from $x$ to $y$. A vertex dominates
	all neighbouring vertices in edge gadgets of outgoing edges - a solution of the \textsc{Maximum Outdegree Orientation} thus has the property that original vertices $v\in V$ dominate at most $t(v)$ neighbours.
	
	Transforming to \textsc{Capacitated Dominating Set} is done by dropping all colours, and giving
	each vertex that was blue an adjacent $P_2$, with the first vertex on this $P_2$ having weight 2.
	All vertices that were originally red have capacity 1.
	On each of these $P_2$s, one vertex must be placed in
	the solution; we can choose the vertex incident to the originally blue vertex, which thus is dominated.
	See \cite{BodlaenderGJJL22}. An example can be found in Figure~\ref{fig:capredblue}
	(right side).
	
	\begin{figure}
		\centering
		\includegraphics{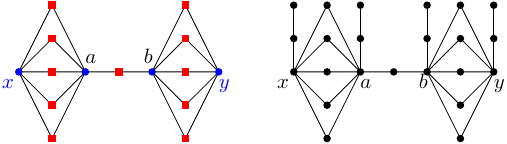}
		\caption{Left: Example of edge transformation for \textsc{Capacitated Red-Blue Dominating Set}. 
			Right: Example of edge transformation for \textsc{Capacitated Dominating Set}.
			The edge has weight $5$, $a$ and $b$ have capacity $6$. $x$ has capacity $t(x)$; $y$ has capacity $t(y)$. All other black vertices (right figure) have capacity 1. }
		\label{fig:capredblue}
	\end{figure}
	
	Note that the transformations keep the planarity of the graph invariant, and thus we can conclude:
	
	\begin{theorem}
		\textsc{Capacitated Red-Blue Dominating Set} and \textsc{Capacitated Dominating Set}
		are XALP-complete for planar graphs with outerplanarity as parameter.
	\end{theorem}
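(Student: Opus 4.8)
The plan is to establish the two halves of XALP-completeness separately, relying on the edge-gadget transformations recalled in the paragraphs above. For membership, I would simply invoke the fact noted in Section~\ref{sec:notation}: since $k$-outerplanar graphs have treewidth at most $3k-1$, any problem that lies in XALP parameterized by treewidth also lies in XALP parameterized by outerplanarity. As both \textsc{Capacitated Red-Blue Dominating Set} and \textsc{Capacitated Dominating Set} are in XALP parameterized by treewidth~\cite{BodlaenderGJPP22a}, membership parameterized by outerplanarity is immediate.

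For hardness, I would reduce from \textsc{Maximum Outdegree Orientation} on planar graphs with outerplanarity as parameter, which is XALP-complete by the results of Section~\ref{ssec:too}. I then apply the transformation recalled above: replace each weighted edge $\{v,w\}$ of capacity $\gamma$ by the gadget on $2\gamma+3$ vertices (the two groups of $\gamma$ red vertices attached to $v$ and $w$, the two blue vertices $a$ and $b$, and the red vertex joining $a$ and $b$), colour every original vertex blue with capacity $t(v)$, set each gadget blue vertex to capacity $\gamma+1$, and ask for a red-blue dominating set of size $|V|+|E|$. Correctness is exactly as argued above (choosing $a$ versus $b$ encodes the two orientations of the edge, and an original vertex dominates precisely the gadget-neighbours of its outgoing edges). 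To reach \textsc{Capacitated Dominating Set}, I would then apply the second recalled transformation: drop all colours, give each originally-blue vertex a pendant $P_2$ whose near endpoint has capacity $2$, and set every originally-red vertex to capacity $1$.

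The point to verify carefully, and the main obstacle, is the control of the outerplanarity under these transformations. Each edge gadget is drawn inside a small planar region adjacent to $\{v,w\}$, so no new crossings are introduced and planarity is preserved. For the layering bound I would argue that every gadget vertex sits at most two layers deeper than the nearer endpoint of its edge: the red vertices adjacent to $v$ (resp.\ $w$) are one layer in, the blue vertices $a,b$ at most one further, and the attached $P_2$'s at most one beyond that; hence each gadget has outerplanarity at most $2$ on its own. Consequently, if the orientation instance is $k$-outerplanar, the resulting \textsc{Capacitated (Red-Blue) Dominating Set} instance is $(k+2)$-outerplanar, so the parameter grows by at most an additive constant, as a parameterized reduction requires. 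Finally, since each gadget is generated locally from a single edge of bounded description, the whole reduction is computable in logarithmic space, yielding the desired pl-reduction and completing the proof.
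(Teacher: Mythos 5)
Your proposal matches the paper's proof essentially step for step: membership via the treewidth bound for $k$-outerplanar graphs, hardness by reducing from \textsc{Maximum Outdegree Orientation} (XALP-complete parameterized by outerplanarity from Section~\ref{ssec:too}) using the two edge-gadget transformations of \cite{BodlaenderGJJL22}, and the observation that these gadgets preserve planarity and increase outerplanarity by only an additive constant --- a point the paper merely asserts and you spell out via an explicit layering argument. The only blemish is that your layer count is slightly loose (your enumeration of red vertices, then $a,b$, then the pendant $P_2$'s suggests up to three extra layers, not two, and the middle red vertex between $a$ and $b$ is unaccounted for), but since any fixed additive increase suffices for a pl-reduction, this does not affect correctness.
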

	
	\subsection{Capacitated Vertex Cover}
	\label{ssec:capvc}
	Another well studied `capacitated' variant of a classic graph problem is 
	\textsc{Capacitated Vertex Cover}. It is XNLP-complete for pathwidth (see \cite{BodlaenderGJJL22}),
	XALP-complete for treewidth~\cite{BodlaenderGJJL22}, improving
	upon an earlier W[1]-hardness proof~\cite{DomLSV08}. Dom et al.~\cite{DomLSV08} give an $O(2^{O(tw \log k)}n^{O(1)})$ time
	algorithm with $k$ the solution size and $tw$ the treewidth and thus show the problem to be FPT for the combined parameter of treewidth and solution size.
	
	\defproblem{\textsc{Capacitated Vertex Cover}}{Graph $G=(V,E)$, positive integer capacity $c(v)$ for each $v\in V$, integer $k$}{Is there a set of vertices $S\subseteq V$, with $|S|\leq k$, and 
		an assignment of each edge to an incident vertex in $S$ such that
		no vertex $v\in S$ has more than $c(v)$ edges assigned to it?}
	
	In~\cite{BodlaenderGJJL22}, the following reduction from \textsc{Maximum Outdegree Orientation} to \textsc{Capacitated Vertex Cover} is discussed: replace each edge $\{x,y\}$ of weight
	$\gamma$ by
	a gadget, with new vertices $a$, $b$, $\gamma$ paths of length 3 from $x$ to $a$, one edge from $a$ to $b$, and $\gamma$ paths of length 3 from $b$ to $y$. For each original vertex $v\in V$,
	add one pendant neighbour. Set for all new vertices their capacity
	equal to their degree, and for the original vertices $v\in V$,
	their capacity equal to their target outdegree plus 1. See Figure~\ref{fig:capvc} for an example.
	
	The correctness proof of the transformation is similar to that
	for \textsc{Capacitated Red-Blue Dominating Set} but was not given
	in \cite{BodlaenderGJJL22}.
	
	\begin{figure}[ht]
		\centering
		\includegraphics[width=\textwidth]{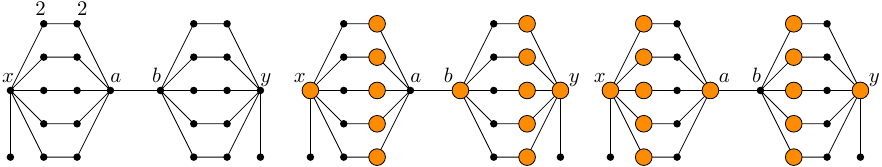}
		\caption{Edge gadget for \textsc{Capacitated Vertex Cover}.
			If the edge $\{x,y\}$ has weight $\gamma$, then $c(a)=c(b)=\gamma+1$. The capacity of the degree-2 vertices in the gadget is 2. $c(x)=t(x)+1$;
			$c(y)=t(y)+1$. Middle: corresponds to directing the edge from
			$x$ to $y$. Right: corresponds to directing the edge from $y$ to $x$.}
		\label{fig:capvc}
	\end{figure}
	
	\begin{lemma}
		Let $H$ be the graph obtained from $G$ by applying the
		reduction as described above to each edge of $G$, with the
		given capacities. $G$ has an orientation with each vertex $v\in V$
		weighted outdegree at most $t(v)$, if and only if 
		$H$ has a capacitated vertex cover of size at most $|V|+\sum_{e\in E} (2w(e)+1)$. 
	\end{lemma}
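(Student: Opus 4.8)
The plan is to prove both directions by making explicit the correspondence between the orientation of an edge and which of its two endpoints \emph{pays} for it in the capacitated cover. Throughout, write $\gamma = w(e)$ for an edge $e=\{x,y\}$, and label the gadget so that the $\gamma$ length-$3$ paths $x - u_1^i - u_2^i - a$ lie on the $x$-side and $b - v_1^j - v_2^j - y$ on the $y$-side. The two natural ``pure'' covers of a gadget are Mode~1, consisting of $a$ together with all $u_1^i$ and all $v_1^j$, and Mode~2, consisting of $b$ together with all $u_2^i$ and all $v_2^j$; each uses exactly $2\gamma+1$ gadget vertices. I would first record the key local fact: in Mode~1 the edge $v_2^j y$ must be assigned to $y$ while every $x u_1^i$ may be assigned to $u_1^i$, so $y$ pays its full weight $\gamma$ and $x$ pays $0$; Mode~2 is symmetric with the roles of $x$ and $y$ exchanged. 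Hence Mode~1 corresponds to orienting $e$ as $y \to x$ and Mode~2 to $x \to y$, the tail always paying $\gamma$ and the head paying $0$.

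For the forward direction, given an orientation with $\mathrm{outdeg}(v)\le t(v)$ for all $v$, I would place every original vertex in $S$ and realise each edge gadget in Mode~1 or Mode~2 according to the direction of $e$, giving total size exactly $|V|+\sum_e(2w(e)+1)$. For the assignment, each pendant edge is assigned to its original vertex, and inside each gadget edges are assigned as in the local analysis: the tail $v$ receives only its $\gamma$ endpoint-edges, so $\mathrm{load}(v)=1+\mathrm{outdeg}(v)\le t(v)+1=c(v)$, while whichever of $a,b$ is chosen receives $\gamma+1$ edges and each degree-$2$ path vertex in $S$ receives $2$, matching their capacities. This is a routine verification once the modes are fixed.

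The reverse direction is the main obstacle, and I would carry it out in three steps. First, an exchange argument: the pendant edges form a matching of size $|V|$ on vertex-disjoint pairs $\{v,p_v\}$, so any cover contains at least $|V|$ of these vertices, and if $p_v\in S$ but $v\notin S$ one may swap $p_v$ out and $v$ in without increasing the size or breaking feasibility---the pendant edge is reassigned to $v$ (load $1\le c(v)$) and every edge previously forced onto $v$'s other endpoints stays there. Thus I may assume $v\in S$, $p_v\notin S$ for all $v$. Second, a tightness argument: in each gadget the internal edges $\{u_1^iu_2^i\}_i,\{v_1^jv_2^j\}_j,ab$ form a matching on $2\gamma+1$ disjoint gadget-internal vertex pairs, forcing at least $2\gamma+1$ gadget vertices; since these matchings and the pendant pairs are pairwise vertex-disjoint, their lower bounds sum to the budget, so every inequality is tight. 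Consequently each gadget uses exactly $2\gamma+1$ internal vertices---exactly one endpoint of each internal matching edge, and exactly one of $a,b$. Third, a case analysis on whether $a\in S$ or $b\in S$, using that the edges $u_2^i a$ and $b v_1^j$ must still be covered, shows that at least one of $x,y$ is forced to pay fully: if $a\in S$ then all $v_1^j\in S$, forcing every $v_2^j y$ onto $y$, so $p_y=\gamma$; symmetrically, if $b\in S$ then $p_x=\gamma$.

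Finally I would read off the orientation by directing each edge out of an endpoint that pays it fully (the previous step guarantees one exists). Then for every $v$, each out-edge $e$ was chosen with $p_v^{(e)}=w(e)$, so
\[
\mathrm{outdeg}(v)=\sum_{e\text{ out of }v} w(e)=\sum_{e\text{ out of }v} p_v^{(e)}\le \sum_{e\ni v} p_v^{(e)}=\mathrm{load}(v)-1\le c(v)-1=t(v),
\]
which is exactly the required orientation. The only delicate points are the feasibility-preserving exchange in the capacitated setting and the tightness argument that pins down the gadget structure; everything else is bookkeeping.
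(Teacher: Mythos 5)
Your proof is correct and takes essentially the same approach as the paper's: your Modes~1 and~2 are exactly the paper's two canonical gadget covers (described there by parity of distance from $x$ resp.\ $y$), and your reverse direction uses the same ingredients --- the vertex-disjoint matchings (pendant pairs, the pairs $u_1^iu_2^i$, $v_1^jv_2^j$, and $ab$) whose lower bounds sum to the budget and force tightness, the swap of a pendant vertex $p_v$ for $v$, and the case analysis on whether $a$ or $b$ lies in $S$ to show the tail is forced to absorb $w(e)$ gadget edges. Your version merely spells out the exchange and tightness steps in more detail than the paper's terse presentation; there is no substantive difference.
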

	
	\begin{proof}
		Suppose we have an orientation of $G$ with each vertex
		outdegree at most its target. Take the following set $S$:
		place all original vertices (in $V$) in $S$, and for
		each edge $xy$, if it is directed from $x$ to $y$, all vertices
		in the gadget with even distance to $x$, and if it is directed
		from $y$ to $x$, all vertices with odd distance to $y$, see
		Figure~\ref{fig:capvc}. 
		
		All vertices inside the gadget have capacity equal to their degree, and when in $S$, get all incident edges assigned to it.
		A vertex $x\in V$ has assigned to it its incident vertex of degree one, and the neighbours in edge gadgets of outgoing edges. Note that the number of the latter type of neighbours equals the weighted outdegree of $v$ in the orientation.
		Thus, we have a capacitated vertex cover of the correct size.
		
		\smallskip
		Now, suppose $H$ has a capacitated vertex cover $S$ of size
		at most $|V|+\sum_{e\in E} (2w(e)+1)$. For each $x\in V$,
		either $x\in S$ or the degree-one neighbor of $x$ is in $S$.
		Also, note that in an edge gadget, we have one vertex in $S$
		for each pair of adjacent degree-two vertices, and $a\in S$ or
		$b\in S$. Therefore, an edge gadget of an edge of weight $\gamma$ 
		must contain at least $2\gamma+1$ vertices in $S$. 
		It follows that we cannot place both a vertex $x\in V$ and
		its degree-one neighbor in $S$, and that we use exactly
		$2w(e)+1$ vertices from an edge gadget of $e$.
		
		Now, if we would use a degree-one neighbor of $x\in V$, we can
		swap this vertex with $x$, and also have a solution. Thus $V\subseteq S$, and each vertex $x\in V\cap S$ has its incident degree one edge associated to it.
		Consider an edge $e=\{x,y\}$. If $b\in S$, then orient the edge as $xy$; otherwise, $a\in S$ and we orient the edge as $yx$.
		If $b\in S$, $a\not\in S$, and thus all neighbors of $a$ are in $S$. This means that the edges from $x$ to the gadget must be
		covered by $x$, and thus $x$ have $w(e)$ edges from the gadget
		assigned to it. The number of gadget edges assigned to $x$ thus
		equals its weighted outdegree, which is at most $t(x)$. The lemma now follows.    
	\end{proof}
	
	As the transformation maintains planarity, and increases the
	outerplanarity of a graph by at most 1, the following result
	is obtained.
	
	\begin{theorem}
		\textsc{Capacitated Vertex Cover} is
		XALP-complete for planar graphs with outerplanarity as parameter.
	\end{theorem}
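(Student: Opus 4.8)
The plan is to combine the correctness Lemma above with two standard observations: a membership argument and an outerplanarity bound for the reduction. For membership, I would invoke the remark from Section~\ref{sec:notation}: since $k$-outerplanar graphs have treewidth at most $3k-1$, any problem that lies in XALP parameterized by treewidth also lies in XALP parameterized by outerplanarity. As \textsc{Capacitated Vertex Cover} is known to be in XALP parameterized by treewidth~\cite{BodlaenderGJJL22}, membership parameterized by outerplanarity follows immediately.

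For hardness, I would reduce from \textsc{Maximum Outdegree Orientation} parameterized by outerplanarity, which is XALP-complete by the Corollary in Section~\ref{ssec:too}. Starting from such an instance $(G,w,t)$, I apply the edge-replacement construction described before the Lemma: replace each edge $\{x,y\}$ of weight $\gamma$ by the gadget with new vertices $a$, $b$, the two bundles of $\gamma$ paths of length three, and the edge $ab$; attach one pendant neighbour to each original vertex; and set capacities as stated, with budget $k=|V|+\sum_{e\in E}(2w(e)+1)$. The Lemma already establishes that $G$ admits an orientation with every weighted outdegree at most $t(v)$ if and only if $H$ admits a capacitated vertex cover of size at most $k$, so the two instances are equivalent, and this exactly yields the ``if and only if'' condition of a parameterized reduction.

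It then remains to check the two quantitative conditions required of a pl-reduction. First, the parameter is preserved up to a constant additive term: each edge gadget is itself $2$-outerplanar and is attached to the rest of $H$ only at its anchors $x$ and $y$, so every internal gadget vertex lies on a face incident to $x$ or $y$ and hence sits at most one layer further from the outer face than its anchor; this keeps the graph planar and raises the outerplanarity by at most $1$. Second, because all weights and capacities are given in unary, the gadget for an edge of weight $\gamma$ has size $O(\gamma)$, so $H$ has polynomial size and can be produced in logarithmic space by recomputing each gadget's description on demand rather than storing it, exactly as discussed in Section~\ref{sec:notation}.

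The bulk of the difficulty is already absorbed into the Lemma, whose two directions handle the exchange argument (swapping a pendant neighbour for its original vertex) and the counting argument (that each weight-$w(e)$ gadget must contribute exactly $2w(e)+1$ cover vertices). Given the Lemma, the only genuinely new ingredient is the outerplanarity bound, and the single point I would verify with care is that attaching the $2$-outerplanar edge gadgets at the anchors $x,y$ never pushes a gadget vertex more than one layer inward; this follows because every gadget vertex is on a face incident to one of its anchors, so it reaches the outer face within one round after the anchor does.
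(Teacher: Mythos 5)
Your proposal is correct and follows essentially the same route as the paper: the paper also derives the theorem from the gadget correctness lemma via a reduction from \textsc{Maximum Outdegree Orientation}, with membership inherited from the treewidth parameterization and the observation that the transformation preserves planarity and increases outerplanarity by at most $1$. You merely make explicit the details the paper leaves implicit (the face-incidence argument for the $+1$ outerplanarity bound and the logspace computability), and these details check out.
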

	
	\subsection{$f$-Domination and $k$-Domination}
	\label{ssec:fkdomination}
	In contrast to capacitated versions of dominating set where vertices can dominate only a limited number of neighbours, in the \textsc{$f$-Dominating Set} problem vertices not in the solution must be dominated multiple times. 
	
	\defproblem{$f$-\textsc{Dominating Set}}{Graph $G=(V,E)$, demand function $f:V\rightarrow \mathbb{N}$, integer $k$}{Is there a set $S\subseteq V$ such that $|S|\geq k$ and for each $v\in V$, we have $v\in S$ or $|N(v)\cap S|\leq f(v)$?}
	
	A special case of $f$-\textsc{Dominating Set} is the $k$-\textsc{Dominating Set} problem, where we $f(v)$ is equal for all vertices $v\in V$. Note that if $k$ is fixed, standard (dynamic programming) techniques
	give an FPT algorithm with treewidth as parameter.
	
	\defproblem{$k$-\textsc{Dominating Set}}{Graph $G=(V, E)$, integers $\ell, k$}{Is there a set $S\subseteq V$ such that $|S|\leq \ell$ and for each $v\in V$ we have $v\in S$ or $|N(v)\cap S|\geq k$.}
	
	The notion of a $k$-dominating set was introduced in 1985 by Fink and Jacobson~\cite{FinkJ85} and was mostly studied from a graph
	theoretic perspective. If $k$ is fixed, then 
	the \textsc{$k$-Dominating Set} problem can be solved in
	$O(k^{O(tw)}n)$ time, using dynamic programming on tree decompositions~\cite{Telle94}.
	The generalization of
	$f$-domination was introduced by Zhou, see e.g.~\cite{ChenZ98}.
	Again, the \textsc{$f$-Dominating Set} can be easily seen to
	be fixed parameter tractable when we take a combined parameter
	of treewidth and the maximum value of $f$, $k=\max_{v\in V} f(v)$,
	as parameter, also by using dynamic programming
	on tree decompositions.
	
	The hardness proof is quite similar.
	An edge of weight $\gamma$ is replaced by the following structure.
	We take a triangle with three vertices, say $a$, $b$, and $c$. Take $\gamma$ copies of a path
	of three vertices, and make the first vertices on these paths adjacent to $x$, and the last (third)
	vertices of these paths adjacent to $a$. Also, take $\gamma$ copies of a path
	of three vertices, and make the first vertices on these paths adjacent to $b$, and the last (third)
	vertices of these paths adjacent to $y$. See Figure~\ref{fig:f-domination} (left side) for an example.
	For all vertices in the new edge gadgets, we set their $f$-values equal to 1. For an original
	vertex $v\in V$, we set $f(v)=t(v)$.
	
	Let $H$ be the resulting graph, with domination demand function $f$.
	
	\begin{figure}[htb]
		\centering
		\includegraphics{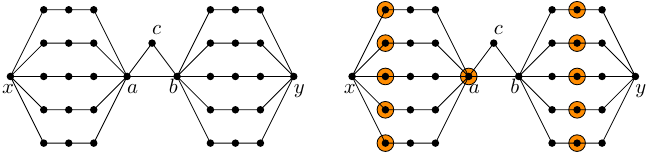}
		\caption{Example transformation of an edge for the \textsc{$f$-Dominating Set} problem. The right side corresponds to an orientation of
			the edge from $x$ to $y$. All new vertices have $f(v)=1$;
			$f(x)=t(x)$; $f(y)=t(y)$}
		\label{fig:f-domination}
	\end{figure}
	
	\begin{lemma}\label{lem:f-dom}
		$H$ has an $f$-dominating set of size $\sum_{e\in E}(2w(e)+1)$, if and only if $G$ has
		an orientation with each vertex outdegree at least $t(v)$.
	\end{lemma}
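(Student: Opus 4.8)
The plan is to prove both directions of the equivalence, viewing the triangle vertex chosen inside each edge gadget as encoding the orientation of that edge, and the selected path endpoints as the ``dominators'' that realise a vertex's outdegree. Throughout write $\gamma = w(e)$ for an edge $e = \{x,y\}$, and label its gadget so that the $\gamma$ length-three paths on the $x$-side run $x - p^i_1 - p^i_2 - p^i_3 - a$ and those on the $y$-side run $b - q^i_1 - q^i_2 - q^i_3 - y$, so that the triangle vertex $a$ sits on the $x$-side and $b$ on the $y$-side of the triangle $abc$. I use the reading of $f$-domination in which a vertex $v\notin S$ must satisfy $|N(v)\cap S|\ge f(v)$, so that vertices are ``dominated multiple times''; the claimed size is then the minimum.

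For the direction from orientations to dominating sets, I would start from an orientation with every weighted outdegree at least $t(v)$ and build $S$ gadget by gadget. If $e=\{x,y\}$ is oriented out of its $x$-endpoint I place the triangle vertex $a$, all $x$-side endpoints $p^i_1$, and all $y$-side middles $q^i_2$ into $S$ (symmetrically for the other orientation). A short check confirms that every internal gadget vertex, each of demand $1$, receives a neighbour in $S$: the $p^i_3$ and the two remaining triangle vertices are covered by $a$, each $p^i_2$ by $p^i_1$, and each $q^i_1,q^i_3$ by $q^i_2$. Each gadget contributes exactly $2\gamma+1$ vertices, matching the target size, and the only neighbours of an original vertex $v$ lying in $S$ are the $p^i_1$ and $q^i_3$ coming from edges oriented out of $v$; there are exactly $\sum_{e\text{ out of }v} w(e)$ of them, i.e. the weighted outdegree of $v$, which is at least $t(v)=f(v)$. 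Hence $S$ is an $f$-dominating set of the required size.

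For the converse I would first pin down the structure of any $f$-dominating set $S$ of size $\sum_e (2w(e)+1)$. A middle vertex $p^i_2$ (resp.\ $q^i_2$) has only its two path-neighbours available as dominators, so each of the $2\gamma$ paths must contribute at least one of its own three vertices to $S$; likewise the apex $c$ forces one of $a,b,c$ into $S$. These sets are pairwise disjoint and disjoint from $V(G)$, giving $|S|\ge \sum_e(2w(e)+1)$, so equality forces exactly one selected vertex per path, exactly one per triangle, and $S\cap V(G)=\emptyset$. Using $x,y\notin S$, a case analysis on the chosen triangle vertex shows the only legal configurations are: triangle $=a$ (each $x$-path takes $p^i_1$ or $p^i_2$, each $y$-path is forced to $q^i_2$), triangle $=b$ (the mirror image), or triangle $=c$ (all paths forced to their middles). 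I orient $e$ out of $x$ in the first and last case and out of $y$ in the second.

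The key point, which is also the main obstacle, is to transfer the domination demand to the orientation despite these ``degenerate'' choices. The crucial observation is that an original vertex $v$ receives dominators only from edges oriented out of $v$, and at most $w(e)$ from each such edge, whereas edges oriented into $v$ and edges in the $c$-mode contribute none of $v$'s neighbours to $S$. Consequently the weighted outdegree of $v$ in the derived orientation is at least the number of dominators of $v$, which in turn is at least $f(v)=t(v)$ because $S$ is $f$-dominating; hence the orientation meets every outdegree target and the equivalence follows. I expect the robustness of the counting lower bound --- that placing an original vertex in $S$ cannot spare a path vertex, since $p^i_2$ is adjacent only to path-internal vertices --- together with this outdegree-dominates-domination inequality to be the delicate parts of the argument; the gadget being $2$-outerplanar then yields the outerplanarity bound needed for the surrounding theorem.
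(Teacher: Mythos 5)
Your proof is correct, and the forward direction coincides exactly with the paper's: same set $S$ (the triangle vertex $a$, the first vertices $p^i_1$ on the $x$-side paths, the middles $q^i_2$ on the $y$-side) for an edge oriented out of $x$, with the same count of $2w(e)+1$ per gadget and the same accounting of $x$'s dominators against its weighted outdegree. Where you genuinely add value is the converse, which the paper dismisses with ``the other direction is analogous'' --- and, as your analysis shows, it is not a literal mirror image: a minimum-size $f$-dominating set need not have the canonical form, since the triangle may be resolved by $c$ (forcing all $2w(e)$ path middles and contributing no dominators to either endpoint), and an $a$-mode gadget may select $p^i_2$ rather than $p^i_1$ on some $x$-side paths. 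Your disjoint-witness counting (one vertex per path forced by the middle vertices $p^i_2,q^i_2$, one triangle vertex forced by $c$, hence $S\cap V(G)=\emptyset$ at the exact size) and the resulting one-sided inequality --- weighted outdegree of $v$ is at least $|N_H(v)\cap S| \geq f(v)=t(v)$, because the degenerate modes only ever under-deliver dominators relative to the outdegree credited --- is precisely what makes the converse sound for the ``at least $t(v)$'' formulation of the lemma; a naive reversal of the forward map would silently assume canonicity and break on the $c$-mode. You also correctly adopt the intended reading of $f$-domination ($|N(v)\cap S|\geq f(v)$ for $v\notin S$), repairing the flipped inequalities in the paper's problem statement. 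In short: same construction and same forward argument, plus a complete and correct proof of the direction the paper leaves implicit.
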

	
	\begin{proof}
		Consider an orientation of $G$ such that each vertex has outdegree at least $t(v)$. We will construct an $f$-dominating set $S$ in $H$ as follows. For each edge $xy\in E(G)$ that is oriented from $x$ to $y$, we add to $S$ the first vertex of each path from $x$ to $a$, the vertex $a$, and the middle vertex of each path from $b$ to $y$. Note that for each such edge we added $2w(e)+1$ vertices to $S$, so $|S|=\sum_{e\in E(G)}(2w(e)+1)$. It is easy to see that all the new vertices in $H$ have a neighbour in $S$. Consider a vertex $x\in V(G)$ and an edge $xy\in E(G)$ oriented from $x$ to $y$. This edge contributes $w(e)$ to the outdegree of $x$ in $G$, and it contributes $w(e)$ to the number of neighbours in the set $S$ in $H$. If an edge $xy\in E(G)$ is oriented from $y$ to $x$, it contributes neither to the outdegree of $x$ in $G$ nor to the number of neighbours of $v$ in $S$ in $H$. The other direction (i.e. constructing an orientation of $G$ from an $f$-dominating set in $H$) is analogous.   
	\end{proof}
	
	\begin{theorem}
		The \textsc{$f$-Dominating Set} problem is:
		\begin{enumerate}
			\item XALP-complete for planar graphs with outerplanarity as parameter.
			\item XNLP-complete with pathwidth as parameter.
			\item XALP-complete with treewidth as parameter.
		\end{enumerate}
	\end{theorem}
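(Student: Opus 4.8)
The plan is to obtain all three statements by combining the equivalence packaged in Lemma~\ref{lem:f-dom} with the known hardness of \textsc{Minimum Outdegree Orientation}, after separately establishing membership. For membership I would run the standard dynamic program over a nice path or tree decomposition, as sketched in Section~\ref{sec:notation}: a machine state stores the current bag together with, for each of its at most $k+1$ vertices $v$, one bit recording whether $v\in S$ and (when $v\notin S$) the number of neighbours of $v$ already placed in $S$, a value we may cap at $\min(f(v),\deg(v))\le n$ and hence store in $O(\log n)$ bits. Together with a single $O(\log n)$-bit counter for $|S|$, a state fits in $O(k\log n)$ space, and the transitions are guessed nondeterministically (using the stack at join nodes in the tree case). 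This gives XNLP membership parameterized by pathwidth and XALP membership parameterized by treewidth; since bounded outerplanarity implies bounded treewidth, XALP membership for outerplanarity follows.

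For hardness I would reduce from \textsc{Minimum Outdegree Orientation}, which by the corollary in Section~\ref{ssec:too} is XNLP-complete parameterized by pathwidth and XALP-complete parameterized by treewidth and by outerplanarity, already for arc capacities in $\{1,2\}$. Given an instance $(G,t)$ I build the graph $H$ by replacing each edge of weight $\gamma$ with the gadget described just before Lemma~\ref{lem:f-dom}, setting $f(v)=t(v)$ on the original vertices and $f\equiv 1$ on the gadget vertices, and ask for an $f$-dominating set of size $\sum_{e\in E}(2w(e)+1)$. Correctness in both directions is precisely Lemma~\ref{lem:f-dom}. The construction is purely local, and with the recomputation technique of Section~\ref{sec:notation} it is carried out in logarithmic space, so it is a pl-reduction.

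What remains, and what I expect to be the main obstacle, is the simultaneous verification that the reduction preserves each of the three parameters up to a bound depending only on the parameter. The edge gadget (a triangle $a,b,c$ together with $\gamma$ length-three paths to each of $x$ and $y$) is planar and $2$-outerplanar: after one peeling round the attachment vertices $x,y$ and the triangle are exposed and removed, leaving a disjoint union of short paths. Hence replacing every edge keeps $H$ planar and raises its outerplanarity by at most $2$. For pathwidth and treewidth I would note that the gadget itself has constant width and can be glued into a bag containing its two attachment vertices, processing the parallel length-three paths one at a time, so the width increases only by an additive constant. Combining membership, Lemma~\ref{lem:f-dom}, the hardness of \textsc{Minimum Outdegree Orientation}, and these parameter bounds yields the three completeness statements; no single step is deep, and the care lies in carrying the outerplanarity, pathwidth, and treewidth bounds through at once.
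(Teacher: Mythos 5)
Your proposal is correct and follows essentially the same route as the paper: hardness via the edge gadget and Lemma~\ref{lem:f-dom} starting from the outdegree-orientation problems of Section~\ref{ssec:too}, membership via the standard nondeterministic dynamic programming sketched in Section~\ref{sec:notation}, with the parameter bounds (planarity preserved, outerplanarity up by a small additive constant, constant additive increase in pathwidth/treewidth) handled as the paper does; you merely make explicit the membership and width-preservation details the paper leaves implicit. One small point in your favour: you reduce from \textsc{Minimum Outdegree Orientation}, which matches the ``outdegree at least $t(v)$'' formulation of Lemma~\ref{lem:f-dom}, whereas the paper's proof text says \textsc{Target Outdegree Orientation} --- a minor inconsistency in the paper that your version avoids.
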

	\begin{proof}
		We reduce from \textsc{Target Outdegree Orientation}. We construct the graph $H$ as in the above description. Using Lemma~\ref{lem:f-dom}, we obtain the desired result.
	\end{proof}
	
	\begin{corollary}
		The \textsc{$k$-Dominating Set} problem is:
		\begin{enumerate}
			\item XNLP-hard for planar graphs with outerplanarity as parameter.
			\item XNLP-complete with pathwidth as parameter.
			\item XALP-complete with treewidth as parameter.
		\end{enumerate}
	\end{corollary}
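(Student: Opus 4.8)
The plan is to obtain $k$-\textsc{Dominating Set} from the $f$-\textsc{Dominating Set} construction of Lemma~\ref{lem:f-dom} by \emph{equalising} all demand values to a single constant~$k$. Recall that in the graph $H$ built there every vertex of an edge gadget has demand $1$, while an original vertex $v\in V(G)$ has demand $t(v)$; write $f_H$ for this demand function on $H$. I would set $k=\max(2,\max_{v} f_H(v))=\max(2,\max_v t(v))$ and turn $H$ into an instance $H'$ of $k$-\textsc{Dominating Set} as follows: to every vertex $u$ of $H$ with $f_H(u)<k$ I attach $k-f_H(u)$ new pendant vertices, each adjacent only to~$u$.

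The key observation is that, since each pendant has degree $1<k$, it can never have $k$ neighbours in the solution, and hence is \emph{forced} into every $k$-dominating set. Thus each vertex $u$ receives exactly $k-f_H(u)$ ``free'' dominators from its pendants, so the requirement $|N(u)\cap S|\ge k$ for $u\notin S$ becomes equivalent to asking that $f_H(u)$ neighbours of $u$ \emph{inside $H$} lie in $S$ --- precisely the $f$-domination requirement, with no interference since a pendant contributes only to the count of its unique neighbour. Writing $P=\sum_{u}(k-f_H(u))$ for the number of added pendants (which is polynomial, as $k$ is polynomial), I would set the budget to $\ell=\sum_{e\in E}(2w(e)+1)+P$. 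Because the $P$ pendants are mandatory, a solution of size at most $\ell$ exists if and only if the non-pendant part is an $f$-dominating set of $H$ of size at most $\sum_e(2w(e)+1)$, which by Lemma~\ref{lem:f-dom} happens exactly when $G$ admits the required orientation.

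For the parameter bounds I would note that attaching degree-one pendants to a vertex raises the outerplanarity by at most one (the pendants of a vertex on layer $i$ can be drawn on layer $i+1$, independently of how many there are), and raises treewidth and pathwidth by at most a constant (each pendant is a leaf, absorbed by adding a bag containing $u$ and that pendant). To keep $k$ genuinely large --- so that the problem does not collapse to the fixed-$k$ case, which is FPT by the dynamic-programming remark preceding the statement --- the reduction must start from \textsc{Target Outdegree Orientation} with targets given in unary (so $k=\max_v t(v)$ can be polynomial in $n$) and \emph{not} from its capacity-$1$-or-$2$ variant, whose targets are bounded by a constant. Starting from the pathwidth-parameterized source yields XNLP-hardness for pathwidth, from the treewidth-parameterized source yields XALP-hardness for treewidth, and from the outerplanarity-parameterized source yields the claimed hardness for outerplanarity. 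Membership for pathwidth (in XNLP) and treewidth (in XALP) follows from the standard dynamic programming of Section~\ref{sec:notation}, where a state stores, for each vertex of the current bag, the number of already-placed $S$-neighbours capped at $k$; this is an integer in $[0,k]$ needing $O(\log n)$ bits, since values are unary.

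The main obstacle I expect is the tightness of the budget argument: I must rule out ``cheating'' solutions that place an original vertex $v\in V(G)$ into $S$ (thereby waiving its demand) or that distribute the $\sum_e(2w(e)+1)$ non-pendant vertices across gadgets in an unintended way. This requires the structural lower bound that every edge gadget needs at least $2w(e)+1$ solution vertices, so that the budget forces equality in each gadget and hence the clean ``one orientation per edge'' structure, keeping every original vertex out of $S$. I would establish this by an argument analogous to the correctness proofs for \textsc{Capacitated Vertex Cover} and the $f$-\textsc{Dominating Set} reduction.
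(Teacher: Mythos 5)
Your proposal is correct and takes essentially the same route as the paper: the paper's proof likewise equalises demands by setting $k=\max_v f(v)$ (assuming $k>1$, since otherwise the problem is FPT), attaching $k-f(v)$ pendant vertices to each vertex of the $f$-\textsc{Dominating Set} instance, observing that every pendant is forced into the solution, and increasing the budget by the number of pendants. One tangential slip worth noting: in the capacity-$1$-or-$2$ variant of \textsc{Target Outdegree Orientation} the \emph{targets} $t(v)$ are sums of incident capacities and hence not bounded by a constant (only the edge weights are), but since you start from the general orientation problem anyway, this does not affect your argument.
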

	
	\begin{proof}
		The reduction for the hardness proofs is the following.
		
		Suppose we are given an instance of \textsc{$f$-Dominating Set}, with $G=(V,E)$ a graph,
		$f$ a demand function, and $\ell$ the maximum solution size.
		Let $k=\max_{v\in V} f(v)$. We may
		assume $k>1$, otherwise the problem is fixed parameter tractable (this can be shown by standard techniques).
		
		Now, to each vertex $v\in V$, add $k-f(v)$ new vertices, only adjacent to $v$. Increase
		$\ell$ by the total number of added new vertices. Correctness follows by observing that
		all new vertices must be an element of the dominating set (they have only one neighbour in $G$), and now each vertex in $V$ must be in the dominating set, or has at least
		$f(v)$ neighbours in $V$ in the dominating set (as it has $k-f(v)$ neighbours that are new vertices in the dominating set.)
	\end{proof}
	
	\subsection{Target Set Selection}
	\label{ssec:tss}
	The \textsc{Target Set Selection} problem models the viral marketing process, i.e. the style of promotion relying on consumers recommending the product to their social network. Informally, the problem setup is as follows. We are given a graph corresponding to a social network, and we want to advertise our product by giving it away to at most $k$ ``influencers'' in this network. They will spread the word about our product and convince others to buy it --- each person will buy the product if it was recommended by at least a certain number (i.e. threshold) of their friends. 
	
	Formally, given a graph $G=(V, E)$, a threshold function $t:V\rightarrow \mathbb{N}$ and $S\subseteq V$, the \emph{activation process in $G$ starting with $S$} is a sequence of subsets $Active[0]\subseteq Active[1]\subseteq\dots$ such that $Active[0]=S$ and a vertex $v$ belongs to $Active[i]$ if it belongs to $Active[i-1]$ or it has at least $t(v)$ neighbours in $Active[i-1]$. We repeat this process until we get $Active[j]=V$ or $Active[j]=Active[j-1]$ for some $j\in\mathbb{N}$, and we define $Active(S)=Active[j]$.
	
	\defproblem{\textsc{Target Set Selection}}{Graph $G=(V, E)$, a threshold $t:V\rightarrow \mathbb{N}$, integer $k$}{Is there a set $S\subseteq V$ such that $Active(S)=V$?}
	
	Ben-Zwi et al.~\cite{Ben-ZwiHLN11} gave an XP algorithm with treewidth as parameter, and an
	$n^{\Omega (\sqrt{tw})}$ lower bound. Their proof also implies W[1]-hardness.
	
	For our hardness proofs, we use an edge gadget that is again similar
	to previous edge gadgets; it is illustrated in Figure~\ref{fig:tss}.
	Here, we reduce from \textsc{Minimum Outdegree Orientation}.
	Original vertices have a threshold which is equal to their target
	outdegree; the vertices $a$ and $b$ have a threshold equal to their
	degree, i.e., the weight of the edge plus one, and the vertices of degree two in the gadget have a threshold of 1. We will call the obtained graph $H$.
	
	\begin{figure}
		\centering
		\includegraphics{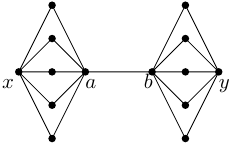}
		\caption{Transformation for Target Set Selection. The example transforms an edge of weight $\gamma=5$.
			$a$ and $b$ have threshold $\gamma+1=6$; $x$ has a threshold equal to $t(x)$; $y$ has a threshold 
			equal to $t(y)$; the other (degree 2) vertices have threshold 1.}
		\label{fig:tss}
	\end{figure}
	
	\begin{lemma}
		We can activate $H$ by initially activating $|E(G)|$ 
		vertices, if and only if $G$ has an orientation with each
		vertex $v\in V$ total weighted outdegree at least $t(v)$.
	\end{lemma}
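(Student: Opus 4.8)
The plan is to prove the stated biconditional directly, keeping the construction of Figure~\ref{fig:tss} in mind. For an edge $e=\{x,y\}$ of weight $\gamma=w(e)$, write $a_e,b_e$ for the two threshold-$(\gamma+1)$ vertices of the gadget; recall that $a_e$ is attached to $x$ through the $\gamma$ paths on the $x$-side and is adjacent to $b_e$, while $b_e$ is attached to $y$ through the $\gamma$ paths on the $y$-side, and that every internal path vertex has threshold $1$. The proof splits into a forward direction (orientation to target set) and a converse (target set to orientation); the converse rests on a \emph{deadlock} observation about $a_e$ and $b_e$.

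First I would do the forward direction. Given an orientation of $G$ with weighted outdegree at least $t(v)$ at every vertex, define $S$ by putting $a_e\in S$ when $e$ is oriented out of its $x$-endpoint, and $b_e\in S$ otherwise; then $|S|=|E(G)|$. Tracing the activation: a seed $a_e$ has its $x$-side last-path-vertices (threshold $1$) as neighbours, so the whole $x$-side cascade fires and all $\gamma$ neighbours of $x$ coming from $e$ become active; symmetrically for $b_e$ and $y$. Hence, before any original vertex activates, each $v$ sees active precisely the neighbours on its seeded sides, whose number is $\sum_{e \text{ out of } v} w(e)$, its weighted outdegree, which is at least $t(v)$. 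Therefore every original vertex activates. Once all of $V$ is active, the remaining (unseeded) sides of every gadget fire by threshold $1$, and finally the unseeded vertex of each pair $\{a_e,b_e\}$ has all $\gamma+1$ neighbours active and activates, so $Active(S)=V(H)$.

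For the converse, the key structural fact is that $a_e$ and $b_e$ are adjacent and each has threshold equal to its degree. Consider the first of the two to become active in the process: it cannot be activated by the rule, since one of its $\gamma+1$ required neighbours is its still-inactive partner. Hence at least one of $a_e,b_e$ must be seeded, for every edge $e$. As there are $|E(G)|$ gadgets and $|S|\le|E(G)|$, the seed set contains exactly one of $\{a_e,b_e\}$ per edge and nothing else; in particular no original vertex is seeded. I then orient $e$ out of its $x$-endpoint iff $a_e\in S$. To bound outdegrees, I use a second deadlock: in a gadget seeded at $a_e$, the $y$-side chains can be activated only through $y$, because $b_e$ (unseeded) needs its entire neighbourhood active first; symmetrically for an $a_e$-unseeded side. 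Thus, at the first moment an original vertex $v$ becomes active, all of its already-active neighbours lie on seeded sides, so their number equals $\sum_{e \text{ out of } v} w(e)$. Since activation of $v$ needs at least $t(v)$ active neighbours and, as $Active(S)=V(H)$, every original vertex is eventually active, each $v$ has weighted outdegree at least $t(v)$, giving the required orientation.

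The main obstacle is making the two deadlock arguments fully rigorous, that is, proving that a vertex on an unseeded side is activated only via its original-vertex endpoint and never ``for free'' through the gadget, even when cycles in $G$ link many gadgets together. I would handle this with a \emph{first-activation-time} induction: for each claimed dependency, take the earliest round in which the vertex in question activates and verify that the activation rule cannot be met unless the designated predecessor was active in an earlier round, contradicting minimality.
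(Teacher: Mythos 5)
Your proposal is correct and follows essentially the same route as the paper: the same deadlock observation that each gadget must seed exactly one of $a_e,b_e$ (since both have threshold equal to degree and are mutually adjacent), the same orientation rule ($a_e\in S$ iff the edge leaves $x$), and the same first-activation-time argument showing unseeded sides fire only through their original endpoint, so the active neighbours of $v$ at its activation moment are bounded by its weighted outdegree. You even supply more detail than the paper, which treats one direction explicitly and calls the other ``analogous''; the only nitpick is that in the converse the number of already-active neighbours of $v$ is \emph{at most} (not exactly) $\sum_{e\text{ out of }v} w(e)$, which still gives the needed inequality $t(v)\le\sum_{e\text{ out of }v} w(e)$.
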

	
	\begin{proof}
		Suppose we can activate $H$ by initially activating
		a set $S$ such that $|S|=|E(G)|$.
		
		Firstly note that in each edge gadget, we must initially 
		activate either $a$ or $b$: otherwise, at least one of them cannot
		reach the threshold. Now, in each edge gadget, exactly one of the vertices $a$ and $b$ is in $S$, as otherwise we would have too many vertices in $S$. 
		Consider an edge gadget where $b\in S$. In order to activate $a$, we must activate all of its degree 2 neighbours, which means we must activate $x$ before that. In other words, in order to activate $x$, we cannot use any of its neighbours in this gadget. Thus, to activate $x$, we must use its degree 2 neighbours from edge gadgets where $a\in S$ (note that in these gadgets all the degree 2 vertices between $a$ and $x$ are activated in the same round). 
		
		This gives us an orientation of edges in $G$: for each edge $xy\in E(G)$, orient the edge from $x$ to $y$ if the corresponding vertex $a$ is in $S$ and from $y$ to $x$ otherwise. By the above arguments, there are at least $t(x)$ neighbours of $x$ activated before $x$, which means that the sum of weights of all outgoing edges from $x$ is $t(x)$. The reverse direction is analogous. 
	\end{proof}
	
	\begin{theorem}
		The \textsc{Target Set Selection} problem is:
		\begin{enumerate}
			\item XALP-complete for planar graphs with outerplanarity as parameter.
			\item XNLP-complete with pathwidth as parameter.
			\item XALP-complete with treewidth as parameter.
		\end{enumerate}
	\end{theorem}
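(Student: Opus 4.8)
The plan is to assemble the three completeness statements from two ingredients: the reduction from \textsc{Minimum Outdegree Orientation} whose correctness is already established by the lemma above, and membership arguments of the standard type discussed in Section~\ref{sec:notation}. For hardness I start from \textsc{Minimum Outdegree Orientation}, which by the corollary in Section~\ref{ssec:too} is XALP-complete for planar graphs with outerplanarity as parameter, XNLP-complete with pathwidth as parameter, and XALP-complete with treewidth as parameter. To each edge I apply the gadget of Figure~\ref{fig:tss}, obtaining the graph $H$ with the indicated thresholds, and I set the target number of initially activated vertices to $k=|E(G)|$. The lemma above states precisely that $H$ can be activated by initially activating $|E(G)|$ vertices if and only if $G$ has an orientation with weighted outdegree at least $t(v)$ at every vertex $v$, so this is a correct Yes/No-preserving reduction, and it remains only to check that it is a pl-reduction and that the width parameters behave well.

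For the parameter bounds I would argue exactly as for the other reductions in Section~\ref{sec:reductionsfromallornothing}. The edge gadget consists of the two special vertices $a,b$ joined by an edge, together with short (constant-length) vertex-disjoint paths of degree-two vertices linking $x$ to $a$ and $b$ to $y$; such a gadget is planar and has outerplanarity at most $2$, so substituting it for each edge keeps $H$ planar and raises the outerplanarity by at most $2$. For pathwidth and treewidth I take a decomposition of $G$ and, in a bag containing both endpoints $x$ and $y$ of an edge, splice in the gadget, sweeping through its paths one at a time while keeping $x,y,a,b$ in the bag; this raises the width by only a constant. Hence the new parameter is bounded by $g$ of the old one, and since the construction is entirely local and all thresholds are polynomially bounded, the map is computable in logarithmic space.

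For membership I would use dynamic programming over the given decomposition, of the type sketched in Section~\ref{sec:notation}. The certificate is an assignment $r\colon V\to\{0,1,\dots,n\}$ of an \emph{activation round} to each vertex, where $r(v)=0$ encodes $v\in S$. The machine guesses $r$ vertex by vertex and verifies the local constraints that at most $k$ vertices receive round $0$ and that every vertex $v$ with $r(v)\geq 1$ has at least $t(v)$ neighbours $u$ with $r(u)<r(v)$. Because the activation process is monotone and stabilises within $n$ rounds, such an assignment exists precisely when $Active(S)=V$ for $S=\{v:r(v)=0\}$. A state records, for each vertex in the current bag, its guessed round together with a counter (capped at its threshold) of already-seen neighbours of strictly smaller round, which is $O(k\log n)$ space; on a path decomposition this gives XNLP membership, while on a tree decomposition the auxiliary stack handles the branching and yields XALP membership.

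The main obstacle I expect lies in the membership argument rather than in the reduction: one must verify that guessing activation rounds and checking only the local ``threshold met by strictly-earlier neighbours'' condition faithfully captures the global, iterative activation dynamics, and that the per-vertex counters, which must detect reaching a possibly large threshold $t(v)$, are maintained within $g(k)\log n$ space. By contrast, the hardness direction is essentially bookkeeping, since the lemma above supplies correctness and the constant-bounded outerplanarity and width of the gadget follow the same template as every other reduction in this section.
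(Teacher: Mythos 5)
Your proposal is correct and matches the paper's proof: hardness is obtained by exactly the same pl-reduction from \textsc{Minimum Outdegree Orientation} via the edge gadget and the preceding lemma, with the same parameter bookkeeping. For membership the paper simply cites a modification of the XP algorithm of Ben-Zwi et al.\ (guessing table entries, with a stack at join nodes), and your explicit certificate of activation rounds with capped counters is precisely the dynamic program underlying that algorithm, so this is the same route spelled out in more detail rather than a different one.
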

	
	\begin{proof}
		As the transformation described above can be done in
		logarithmic space, hardness follows from the hardness for
		\textsc{Minimum Outdegree Orientation}.
		
		Membership follows from a modification of the XP algorithm
		in \cite{Ben-ZwiHLN11}. Instead of building the entire DP
		table for a path decomposition, we guess the `next element'. 
		In a tree decomposition, we traverse the tree in post-order,
		guessing elements instead of building the entire DP table;
		in a (join) node with two children, after handling the left
		branch, we store its result on a stack, then handle the right
		branch, and then combine the results of left and right branch.
	\end{proof}

\section{Conclusion}
	\label{sec:conclusion}
	In this paper, we showed XALP-completeness for several problems on planar graphs parameterized by outerplanarity.
	In a number of cases, these
	were problems not yet established to be XALP-complete for treewidth, 
	and thus for these, we also obtained new results for the
	general class of graphs. 
	
	The results are also interesting in the light of a conjecture by Pilipczuk and 
	Wrochna~\cite{PilipczukW18}, which claims that an XNLP-hard problem has no XP algorithm 
	that uses $O(f(k)n^{O(1)})$ space.
	Our results are negative in the sense that it appears that for the problems studied in this setting,
	going from graphs of bounded treewidth to graphs of bounded outerplanarity, the complexity does not drop.
	
	We expect that more problems that are in XP with outerplanarity as parameter are complete for XALP,
	and leave finding more examples as an interesting open problem. One candidate is \textsc{Multicommodity Flow}:
	in \cite{BodlaenderMOPL23}, it was shown that it is XALP-complete for treewidth
	and XNLP-complete for pathwidth; capacities are given in unary. The
	result already holds for 2 commodities. The gadgets used in that
	proof are non-planar, so establishing the complexity of \textsc{Multicommodity Flow} with outerplanarity seems to need new techniques.
	
	Another interesting direction for further research is the following. Problems with an algorithm 
	with running time of the form $O(2^{O(\ell)} n^{O(1)})$ with $\ell$ the treewidth or outerplanarity
	are in XP when we take `logarithmic treewidth' or `logarithmic outerplanarity' as parameter, i.e.,
	the parameter is $\frac{\ell}{\log n}$. In \cite{BodlaenderGNS22a}, it was shown that 
	\textsc{Dominating Set} and \textsc{Independent Set} are XNLP-complete with logarithmic pathwidth
	as parameter; this translates to XALP-completeness for these problems with logarithmic treewidth
	as parameter~\cite{BodlaenderGJPP22a}. As these proofs produce non-planar graphs, it is open
	whether \textsc{Dominating Set} and \textsc{Independent Set} are XALP-complete when we take
	logarithmic outerplanarity as parameter.

    Finally, Jansen et al.~\cite{JansenKKLMS23}, show that the \textsc{Upward Planarity Testing} problem
    and the \textsc{Orthogonal Planarity Testing} problem are W[1]-hard Parameterized by Treewidth. 
    These results are proven by a chain of reductions, via \textsc{All-or-Nothing Flow}, and \textsc{Circulating Orientation}, 
    for special instances. Combining the reductions in our paper and those from Jansen et al.~\cite{JansenKKLMS23} seems
    to lead to XALP-completeness proofs for these two problems with treewidth as parameter, under parameterized reductions.
    Whether these results would lead to XALP-completeness with treewidth as parameter under pl-reductions, and/or
    to XALP-completeness with outerplanarity as parameter (for the upward and orthogonal planarity testing) is left for
    further exploration.

	\bibliographystyle{abbrvurl}
	\bibliography{papers}
	%\printbibliography
\end{document}